
\documentclass[review]{elsarticle}
\usepackage[a4paper, margin=1.5cm]{geometry}

\usepackage{t1enc}
\usepackage{graphicx}
\usepackage{natbib}
\usepackage{longtable}
\usepackage[table,usenames,dvipsnames]{xcolor}
\usepackage{caption}
\usepackage{multirow}
\usepackage{hyperref}
\usepackage{rotating}
\usepackage{array}
\usepackage{hhline}

\journal{Journal of Mathematical Economics}

\bibliographystyle{model2-names}\biboptions{authoryear}

\newcolumntype{?}[1]{!{\vrule width #1}}

\newcommand{\vabs}{\widehat{v}}
\newcommand{\vrel}{v}

\newcommand{\uabs}{\widehat{u}}
\newcommand{\urel}{u}



\newcommand{\en}{\cellcolor{Gray!25}}
\definecolor{myBlue}{RGB}{120,200,210}
\definecolor{myGreen}{RGB}{0,180,0}
\definecolor{myRed}{RGB}{220,0,0}
\definecolor{myOrchid}{RGB}{190,125,219}

\newcommand\sign {\text{sign}}

\usepackage{amsmath}
\usepackage{amsthm}
\usepackage{amssymb}

\theoremstyle{plain}
\newtheorem{theorem}{Theorem}[section]
\newtheorem*{theorem*}{Theorem}
\newtheorem{lemma}{Lemma}[section]
\newtheorem{corollary}{Corollary}[section]

\theoremstyle{definition}
\newtheorem{example}{Example}[section]
\newtheorem{definition}{Definition}[section]
\newtheorem{remark}{Remark}[section]
\newtheorem*{remark*}{Remark}
\newtheorem*{assumption*}{Assumption}

\def\c#1{\multicolumn{1}{|c|}{#1}}  

\begin{document}

\begin{frontmatter}
\title{Resource-monotonicity and Population-monotonicity in Connected Cake-cutting}

\author[biu]{Erel Segal-Halevi}

\author[gt,cor]{Bal\'azs Sziklai}

\address[biu]{Bar-Ilan University, Ramat-Gan 5290002, Israel. erelsgl@gmail.com}

\address[gt]{'Momentum' Game Theory Research Group, Centre for Economic and Regional Studies, Hungarian Academy of Sciences. H-1112 Budapest Buda\"{o}rsi \'ut 45., Email: sziklai.balazs@krtk.mta.hu}

\address[cor]{Corvinus University of Budapest, Department of Operations Research and Actuarial Sciences, H-1093 Budapest F\H{o}v\'am t\'er 8.}

\begin{abstract}
In the classic cake-cutting problem (Steinhaus, 1948), a heterogeneous resource has to be divided among $n$ agents with different valuations in a \emph{proportional} way ---
giving each agent a piece with a value of at least $1/n$ of the total. In many applications, such as dividing a land-estate or a time-interval, it is also important that the pieces are \emph{connected}.
We propose two additional requirements: resource-monotonicity (RM) and population-monotonicity (PM). When either the cake or the set of agents changes and the cake is re-divided using the same rule, the utility of all remaining agents must change in the same direction. Classic cake-cutting protocols are neither RM nor PM.
Moreover, we prove that no Pareto-optimal proportional division rule can be either RM or PM.
Motivated by this negative result, we search for division rules that are \emph{weakly-Pareto-optimal} --- no other division is strictly better for all agents.
We present two such rules. The \emph{relative-equitable} rule, which assigns the maximum possible relative value equal for all agents, is proportional and PM. The so-called \emph{rightmost mark} rule, which is an improved version of the Cut and Choose protocol, is proportional and RM for two agents.
\end{abstract}

\begin{keyword}
fair division\sep cake-cutting\sep resource-monotonicity\sep population-monotonicity\sep connected utilities
\end{keyword}

\end{frontmatter}
\section{Introduction}
Monotonicity axioms have been extensively studied with respect to cooperative game theory \citep{Calleja2012}, political representation \citep{Balinski1982}, computer resource allocation \citep{Ghodsi2011Dominant} and many other fair division problems
\cite[chapters 3 6 7]{Moulin2004Fair}, \cite[chapter 7]{Thomson_2011}.

These axioms express the idea of solidarity among agents: whenever the environment changes in a way that requires the re-allocation of resources, the welfare of all agents not responsible for the change should be affected in the same direction --- either they should all be made at least as well off as they were initially, or they should all be made at most as well off. This is the so called replacement principle which was formulated by \cite{Thomson1997Replacement}.

Two common monotonicity axioms are \emph{resource monotonicity} (RM) and \emph{population monotonicity} (PM).
Resource-monotonicity, sometimes known as aggregate monotonicity, requires that when new resources are added, and the same division rule is used consistently, the utility of all agents should weakly increase.
Population-monotonicity is concerned with changes in the number of participants. It requires that when someone leaves the division process and abandons his share, the utility of the remaining participants should weakly increase. Conversely, when a new agent joins the process, all existing participants should participate in supporting the new agent, thus their utility should weakly decrease.

Monotonicity axioms are sometimes conceived as more important than other, more basic fairness axioms.
A prominent example is the practical problem of \emph{apportionment}: there is a parliament with a fixed number of seats and administrative regions with different number of voters. The seats have to be distributed among the regions in such way that the resulting allotment ensures proportional representation. The solution originally employed in the USA congress was the Hamiltonian rule, which guaranteed proportional representation. However, it was found that this rule exhibits the \emph{Alabama paradox} --- increasing the number of seats would have rendered state Alabama with less seats. In other words, the rule violates resource-monotonicity. Later, it was found that this rule also exhibits the \emph{new state paradox} (the \emph{``Oklahoma paradox''}) --- the addition of Oklahoma to the USA would have rendered state Maine with more seats. In other words, the rule also violates population-monotonicity. These violations pressed legislators to adopt a new apportionment method. The currently used method, the so called Huntington-Hill method, fails to uphold Hare-quota, a basic guarantee of proportionality, but it is satisfies the monotonicity axioms \citep{Balinski1975,caulfield2008apportioning}.

The present paper studies these two monotonicity requirements in the framework of the \emph{fair cake-cutting} problem \citep{Steinhaus1948}, where a single heterogeneous resource - such as land or time - has to be divided fairly. We approach the problem from the classic point of view, when each agent is interested in getting a connected piece. Connected utility functions make sense in many applications. E.g., when dividing land, a large connected piece can be used for building a house, but a collection of small disconnected patches of land is virtually useless. Similarly when departments dispute over the availability of a conference room, each of them is interested in reserving the room for a contiguous period which is free of disruption. Another example is a long TV ad, which needs to be aired in one piece. These examples show that assuming connected utilities in some cases is a reasonable restriction, and indeed, many cake-cutting papers explicitly assume that each agent must be allocated a connected piece.

\subsection{Results}
We survey many traditional cake-cutting protocols and show that they do not satisfy either of the monotonicity axioms. In particular, all methods based on the Cut and Choose scheme violate both resource-monotonicity and population-monotonicity.
This motivates a search for division rules that are both fair in the conventional sense and monotonic. We conducted this search under two different assumptions regarding the agents' utility functions, which are equally common in the cake-cutting literature.

In both models, each agent has a value measure defined over the cake. In the \emph{additive} model, the utility of a piece of cake is just the value measure of that piece; the geometry of the piece has no importance. In the \emph{connected} model, the utility of a piece of cake is the value of the most valuable connected component of the piece.

Our results for the additive model can be found in another manuscript \citep{ourArxivPaperAdditive}. These results were mainly positive: we found several Pareto-optimal proportional division rules that satisfy one or both monotonicity axioms. In particular, the \emph{Nash-optimal} rule, maximizing the product of values, is envy-free (hence also proportional), resource-monotonic and population-monotonic.

The present paper studies the connected model. Here, the situation is not so positive. Each of the monotonicity properties is incompatible with proportionality and Pareto-optimality. That is, no Pareto-optimal proportional division rule can be either resource- or population-monotonic. Thus the fair divider has to choose between Pareto-optimality and monotonicity. While from an economics perspective Pareto-optimality is crucial, public opinion may not always agree. In some cases people are willing to sacrifice efficiency to get fairness \citep{Herreiner2009}. As a compromise, we suggest several division rules which are proportional and \emph{weakly}-Pareto-optimal (no other allocation is strictly preferred by all agents; see e.g.\ \cite{Varian1974Equity}) while satisfying one of the monotonicity axioms. The \emph{max-equitable-connected} rules, which give equally-valuable pieces to each agent while maximizing this value, are both population monotonic.
There are two such rules: the rule equalizing the relative values (normalized such that the entire cake value is 1) is proportional but not resource-monotonic, and the rule equalizing the absolute (not normalized) values is resource-monotonic but not proportional. Additionally, we present a proportional and resource-monotonic division protocol for two agents. It is an open question whether there exists a weakly-Pareto-optimal, proportional and resource-monotonic rule for $n$ agents.

The equitable rule belongs to the cardinal welfarism framework (cf.\ chapter 3 of \cite{Moulin2004Fair}). It relies on inter-agent utility comparison, and makes sense if and when such a comparison is feasible. For example, suppose the agents are firms, each of which wants to use the land to a pre-specified purpose (e.g.\ one firm plans to dig for oil, another firm wants to build housing complexes, etc.). Then, economic models can be used to estimate the monetary utility of each firm for each piece of land and the estimates can be used to calculate equitable divisions (see the conclusion of \cite{Chambers_2005} for further discussion of the additive utility model).

The paper is organized as follows. Section \ref{sec:literature} reviews the related literature. Section \ref{sec:model} formally presents the cake-cutting problem and the monotonicity axioms. Section \ref{sec:classic-protocols} examines classic cake-cutting protocols and shows that they are not monotonic. Sections \ref{sec:negative} and \ref{sec:positive} present our negative and positive results. Section \ref{sec:conclusion} concludes and presents a table summarizing the various rules' properties.

\section{Related Work}\label{sec:literature}
The cake-cutting problem originates from the work of the Polish mathematician Hugo Steinhaus and his students Banach and Knaster \citep{Steinhaus1948}. Their primary concern was how to divide the cake in a fair way. Since then, game theorists analyzed the strategic issues related to cake-cutting, while computer scientists were focusing mainly on how to implement solutions, i.e.\ the computational complexity of cake-cutting protocols.

Many economists regard land division as an important application of division procedures \citep[e.g. ][]{Berliant1988Foundation,Berliant1992Fair,Legut1994Economies,Chambers2005Allocation,DallAglio2009Disputed,Husseinov2011Theory,Nicolo2012Equal}).
Hence, they note the importance of imposing some geometric constraints on the pieces allotted to the agents. Connectivity is the most well-studied constraint.

As we already noted in the introduction there is a vast literature on monotonicity related issues. To our knowledge our paper is the first that explicitly defines RM and PM for the cake cutting setting. However, there are a few other axioms which bear resemblance to these two.

\cite{Arzi2011} study the "dumping paradox" in cake-cutting. They show that, in some cakes, discarding a part of the cake improves the total social welfare of any envy-free division. This implies that enlarging the cake might decrease the total social welfare. This is related to resource-monotonicity; the difference is that in our case we are interested in the welfare of the individual agents and not in the total social welfare.

\cite{Chambers_2005} studies a related cake-cutting axiom called "division independence": if the cake is divided into sub-plots and each sub-plot is divided according to a rule, then the outcome should be identical to dividing the original land using the same rule. He proves that the only rule which satisfies Pareto-optimality and division independence is the utilitarian-optimal rule - the rule which maximizes the sum of the agents' utilities. The rule is only feasible when the utilities are additive (with no connectivity constraints). Unfortunately, this rule does not satisfy fairness axioms such as proportionality.

\cite{Walsh2011Online} studies the problem of "online cake-cutting", in which agents arrive and depart during the process of dividing the cake. He shows how to adapt classic procedures like cut-and-choose and the Dubins-Spanier in order to satisfy online variants of the fairness axioms. Monotonicity properties are not studied, although the problem is similar in spirit to the concept of population-monotonicity.

Finally, we mention that the consistency axiom (cf.\ \cite{Young1987} or \cite{Thomson_2012}) is related to population-monotonicity, but it is fundamentally different as in that case the leaving agents take their fair shares with them.

\subsection{Equitable divisions}
The ``heroes'' of the present paper are the equitable division rules. The equitability condition in cake-cutting is much less studied than other properties such as proportionality and envy-freeness. Some notable exceptions are presented below.

The first proof to the \emph{existence} of an equitable division is implied by the seminal work of \citet{Dubins_1961}. The piece allocated to each agent can be an arbitrary member of a $\sigma$-algebra, i.e, not necessarily connected. The result of \citet{Alon_1987} implies the existence of an equitable division with a limited number of cuts, but still not necessarily connected. Max-relative-equitable divisions without the connectivity requirement were studied extensively by \cite{DallAglio2001DubinsSpanier} (he calls such divisions \emph{equi-optimal}).
Max-relative-equitable divisions with the connectivity requirement for two agents were studied by \cite{Jones2002}. The generalization for $n$ agents was mentioned by \citet{BramsJonesKlamler2006,BramsJonesKlamler2007}. They related the problem of equitable-connected cake-cutting to a set of integral equations, but did not prove they are solvable.
The latter point was discussed by  \citet{mawet2010equitable} for the special case when the valuations are piecewise-constant and everywhere-positive. \citet{Aumann2010Efficiency} proved that equitable-connected allocations exist for general valuations. \citet{Cechlarova2013Existence} extended this result and proved that an equitable-connected division exists for any ordering of the agents, and for at least one ordering it is also proportional.

The \emph{computability} of equitable allocations is discussed by several recent works. \citet{Cechlarova2012Computability} proved that there is no finite discrete procedure for finding an allocation that is equitable, connected and proportional. \citet{Procaccia2017} showed that this impossibility holds even without the connectivity and proportionality requirements. On the positive side, \cite{Cechlarova2011Near,Cechlarova2012Computability} provided discrete procedures that attain \emph{$\epsilon$-equitable connected divisions} --- divisions in which the difference between the value of every two agents is at most $\epsilon$.
Independently and contemporaneously to our work, \citet{Branzei2017Query} presented a moving-knife procedure for equitable cake-cutting, for the special case in which all players are ``hungry'' (i.e, all valuations are strictly positive).

The main contribution of the present paper to the literature on equitable division is in showing its advantages over other, more famous cake-cutting procedures. In particular, we show that it is population monotonic, and can be made either proportional or resource-monotonic depending on whether relative or absolute values are used.

A secondary contribution is a moving-knives procedure for finding an equitable-connected division in any ordering of the agents, which is applicable for general valuations (not only strictly positive). This does not contradict the impossibility results mentioned above, since a moving-knife procedure is continuous rather than discrete.

\section{Model}  \label{sec:model}
\subsection{Cake-cutting}
A cake-cutting problem is a triple $\Gamma(N,C,(\vabs_i)_{i \in N})$ where:

\begin{itemize}
  \item $N=\{1,2,\dots,n\}$ denotes the set of agents who participate in the cake-cutting process. In examples with a small number of agents, we often refer to them by names (Alice, Bob, Carl...).
  \item $C$ is the cake. For simplicity we assume that $C$ is a interval, $C=[0,c]$ for some real number $c$. We call a Borel subset of $C$ a \emph{slice}.
  \item $\vabs_i$ is the value measure of agent $i$. It is a finite real-valued function defined on the Borel subsets of $[0,\infty)$.
\end{itemize}


As the term ``measure'' implies, the value measures of all agents are countably additive: the value measures of a union of disjoint slices is the sum of the values of the slices. Moreover, we assume that the value measures are non-negative and bounded. That is, $\vabs_i$ assigns a non-negative, but finite number to each slice of $C$.
We also assume that the value measures are absolutely-continuous with respect to Lebesgue measure: this means that a slice with zero length has zero value to anyone. Therefore it is unimportant to specify which agent gets the endpoints of an interval, since the endpoints have zero value.
All these assumptions are standard in the cake-cutting literature.

Our model diverges from the standard cake-cutting setup in that we do not require the value measures to be normalized. That is, the value of the entire cake is not necessarily the same for all agents. This is important because we examine scenarios where the cake changes, so the cake value might become larger or smaller. Hence, we differentiate between \emph{absolute} and \emph{relative} value measures:
\begin{itemize}
\item The \textbf{absolute} value measure of the entire cake, $\vabs_i(C)$, can be any positive value and it can be different for different agents.
\item The \textbf{relative} value of the entire cake is 1 for all agents.
Relative value measures are denoted by $\vrel_i$ and defined by:
$\vrel_i(S):={\vabs_i(S) / \vabs_i(C)}$.
\end{itemize}

It is also common to assume that value measures are private information of the agents. This question leads us to whether agents are honest about their preferences. Cake-cutting problems can be studied from a strategic angle, however, the results are mostly negative. For example, in any deterministic discrete strategy-proof protocol, there always exists an agent that gets the empty piece \citep{branzei2015dictatorship}.
Here, we will not analyze the strategic behavior of the agents but assume they act truthfully.

The \emph{utility} of an agent is based on its value measure. In the present paper we assume that:
\begin{align*}
\uabs_i(X)=\sup_{I\subset X} \vabs_i(I)
&&
\urel_i(X)=\sup_{I\subset X} \urel_i(I)
\end{align*}
where the supremum is over all connected intervals $I$ that are subsets of $X$. That is, an agent can only use a single connected piece.


The aim is to divide the cake into $n$ pairwise-disjoint slices. A \emph{division rule} is a correspondence that takes a cake-cutting problem as input and returns a \emph{division} $X=(X_1,\dots, X_n)$, or a set of divisions. Note that a division does not necessarily compose a partition of $C$ (i.e.\ free disposal is assumed).

Since all agents have connected utilities, we can assume without loss of generality that each agent receives a connected piece, i.e, for all $i$, $X_i$ is an interval. Under this assumption, $\uabs_i(X_i)=\vabs_i(X_i)$ and $\urel_i(X_i)=\vrel_i(X_i)$ for all $i$, so from now on we will use only $\vabs_i$ and $\vrel_i$.

A division rule $R$ is called \emph{essentially single-valued (ESV)} if $X,Y \in R(\Gamma)$ implies that for all $i \in N$, $\vabs_i(X_i)=\vabs_i(Y_i)$. That is, even if $R$ returns a set of divisions, all agents are indifferent between these divisions.

The classic requirements of fair cake-cutting are the following. A division $X$ is called:

\begin{itemize}
  \item \emph{Pareto-optimal} (PO) if there is no other division which is weakly better for all agents and strictly better for at least one agent.
  \item \emph{Weakly-Pareto-optimal} (WPO) if there is no other division which is strictly better for all agents.
  \item \emph{Proportional} (PROP) if each agent gets at least $1/n$ fraction of the cake according to his own evaluation, i.e.\ for all $i \in N$, $\vrel_i(X_i)\geq 1/n$. Note that the  definition uses relative values.
  \item \emph{Envy-free} (EF) if each agent gets a piece which is weakly better, for that agent, than all other pieces: for all $i,j \in N$, $\vrel_i(X_i)\geq \vrel_i(X_j)$. Note that here it is irrelevant whether absolute or relative values are used. Note also that PO+EF imply PROP.
\end{itemize}

A division rule is called \emph{Pareto-optimal} (PO) if it returns only PO divisions. The same applies to WPO, PROP and EF.

\subsection{Monotonicity}
We now define the two monotonicity properties. In the introduction we defined them informally for the special case in which the division rule returns a single division. Our formal definition is more general and applicable to rules that may return a set of divisions.

\begin{definition}
Let $N$ be a fixed set of agents, $C=[0,c]$, $C'=[0,c']$ two cakes where $c<c'$, and $(\vabs_i)_{i \in N}$ value measures on $[0,\infty)$. The cake-cutting problem $\Gamma'=(N, C', (\vabs_i)_{i \in N})$ is called a \emph{cake-enlargement} of the problem $\Gamma=(N, C, (\vabs_i)_{i \in N})$.
\end{definition}

By definition the cake is always enlarged on the right hand side. This might be critical for some protocols. For instance in the Dubins-Spanier moving knife protocol the cake is processed from left to right \citep{Dubins_1961}.   However, most of our results (except that of Subsection \ref{sub:right-mark}) are valid whenever $C\subset C'$, regardless of whether the cake is enlarged from the left, right or middle.

\begin{definition}
(a) A division rule $R$ is called \emph{upwards resource-monotonic}, if for all pairs $(\Gamma, \Gamma')$, where $\Gamma'$ is a cake-enlargement of $\Gamma$, for every division $X\in R(\Gamma)$ there \emph{exists} a division $Y \in R(\Gamma')$ such that $\vabs_i(Y_i) \geq \vabs_i(X_i)$ for all $i \in N$ (i.e\ all agents are weakly better-off in the new division).

(b) A division rule $R$ is called \emph{downwards resource-monotonic}, if for all pairs $(\Gamma', \Gamma)$, where $\Gamma'$ is a cake-enlargement of $\Gamma$, for every division $Y\in R(\Gamma')$ there \emph{exists} a division $X \in R(\Gamma)$ such that $\vabs_i(X_i) \leq \vabs_i(Y_i)$ for all $i \in N$ (i.e\ all agents are weakly worse-off in the new division).

(c) A division rule is \emph{resource-monotonic} (RM), if it is both upwards and downwards resource-monotonic.
\end{definition}

\begin{definition}
Let $C$ be a fixed cake, $N$ and $N'$ two sets of agents such that $N\supset N'$ and $(\vabs_i)_{i \in N}$ their value measures.
The cake-cutting problem $\Gamma'=(N', C, (\vabs_i)_{i \in N'})$ is called a \emph{population-reduction} of the problem $\Gamma=(N, C, (\vabs_i)_{i \in N})$.
\end{definition}

\begin{definition}
(a) A division rule $R$ is called \emph{upwards population-monotonic}, if for all pairs $(\Gamma', \Gamma)$ such that $\Gamma'$ is a population-reduction of $\Gamma$, for every division $Y \in R(\Gamma')$ there exists a division $X \in R(\Gamma)$ such that $\vabs_i(X_i) \leq \vabs_i(Y_i)$ for all $i \in N'$ (all the original agents are weakly worse-off in the new division).

(b) A division rule $R$ is called \emph{downwards population-monotonic}, if for all pairs $(\Gamma, \Gamma')$ such that $\Gamma'$ is a population-reduction of $\Gamma$, for every division $X \in R(\Gamma)$ there exists a division $Y \in R(\Gamma')$ such that $\vabs_i(Y_i) \geq \vabs_i(X_i)$ for all $i \in N'$ (all remaining agents are weakly better-off in the new division).

(c) A division rule is \emph{population-monotonic} (PM), if it is both upwards and downwards population-monotonic.
\end{definition}

\begin{remark}
As usual in the literature, the monotonicity axioms care only about absolute values. In other words, it is not considered a violation of RM if the relative value of an agent decreases when the cake grows.
\end{remark}

\begin{remark} For essentially-single-valued solutions, downwards resource (or population) monotonicity implies upwards resource (or population) monotonicity and vice versa. Set valued solutions, however, may satisfy only one direction of these axioms.
\end{remark}

\begin{remark}
The monotonicity axioms in \cite{Thomson_2011} require that \emph{all} divisions in $R(\Gamma)$ have to be weakly better/worse than all divisions in $R(\Gamma')$. In contrast, our definition only requires that there \emph{exists} such a division. This is closer to the definition of aggregate monotonicity, which originates from cooperative game theory \citep{Peleg2007}. The rationale is that even if a set-valued solution is used, only a single allocation will be implemented. Hence, the divider can be faithful to the monotonicity principles even if the rule suggests many non-monotonic allocations as well.

Because our monotonicity requirements are weaker, any impossibility result in our model is valid in Thomson's model, too. This is not true in general for positive results; however, the specific positive results in the present paper are all based on essentially-single-valued rules. Hence, by the previous remark, they are valid in Thomson's model, too.
\end{remark}

\section{Monotonicity of classic cake-cutting protocols} \label{sec:classic-protocols}
Although resource- and population-monotonicity are well established axioms in various fields of fair division, the cake-cutting literature has not adopted these ideas so far.
Moreover, classical division methods like the Banach-Knaster \citep{Steinhaus1948}, Cut and Choose, Dubins-Spanier \citep{Dubins_1961}, Even-Paz \citep{Even1984}, Fink \citep{Fink1964} or the Selfridge-Conway protocol do not satisfy these axioms. A detailed explanation for most of these can be found in \citep{Brams1996}. For completeness, our survey includes procedures that return disconnected pieces.

All the counterexamples below feature \emph{piecewise homogeneous} cakes. These are finite unions of disjoint intervals, such that on each interval the value densities of all agents are constant (although different agents may evaluate the same piece differently). In such cases, the function $\vabs_i([0,x])$ --  which displays the value (for agent $i$) of the piece which lies left to the point $x\in \mathbb{R}$ -- is a piecewise-linear function (see Figure \ref{plvf4}). Piecewise-homogeneous cakes are interesting on their own (see e.g.\ \cite{Aziz2014Cake}), however, we stress that our results, hold for arbitrary cakes - not only for piecewise-homogeneous ones.

\begin{figure}
  \centering
  \includegraphics[width=13.5cm]{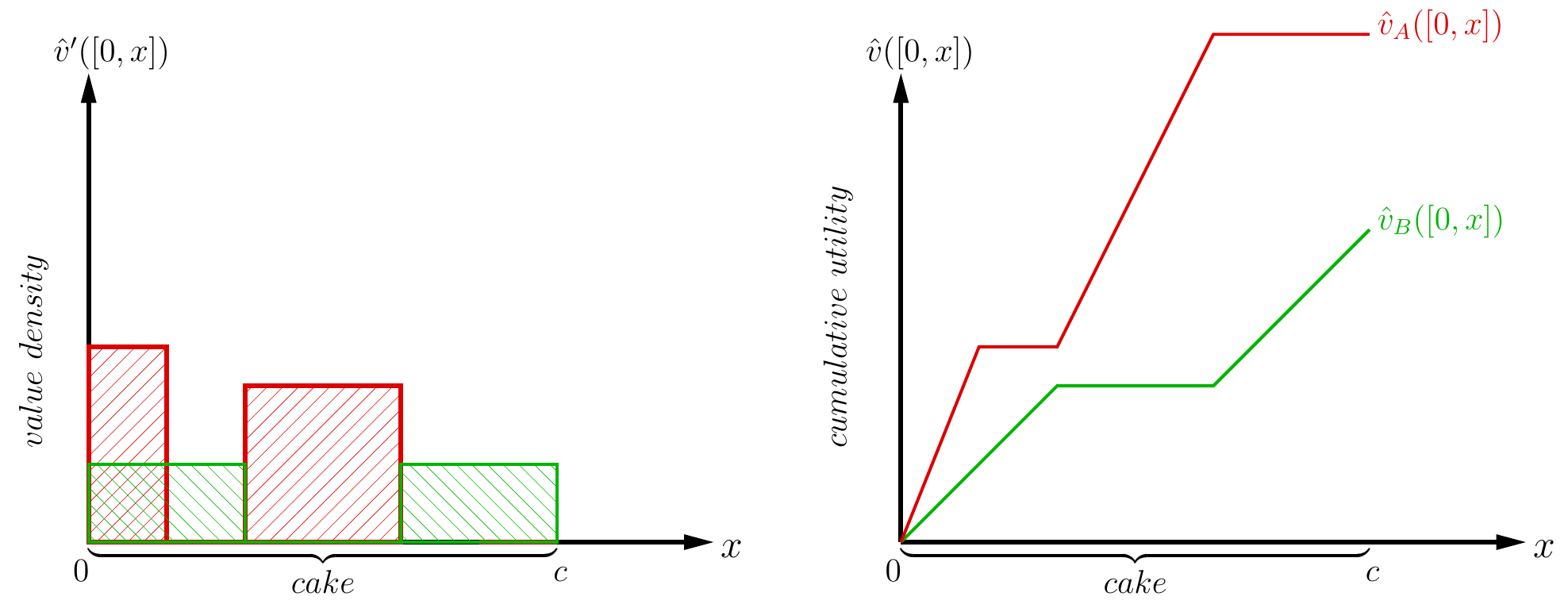}\\
  \caption{Piecewise homogeneous cake with two players. The derivative of $\hat{v}([0,x])$ indicates the density of the value. Note that the value measures are not normalized, hence $\vabs_A([0,c])\neq \vabs_B([0,c])$.}\label{plvf4}
\end{figure}

Piecewise homogeneous cakes can be represented by a simple table containing the value densities of the agents on the different slices. For example the cake in Figure \ref{plvf4} has the following representation.

\begin{table}[h!]
  \centering
  \begin{tabular}{|l||c|c|c|c|c|c|}
    \hline
    $\vabs_A$ & 2.5 & 0 & 2  & 2 & 0  & 0 \\
    \hline
    $\vabs_B$ & 1 & 1 & 0  & 0 & 1 & 1 \\
    \hline
  \end{tabular}
  \end{table}

\subsection{Resource-monotonicity}
First let us examine the Cut and Choose protocol for two agents.
We define the \emph{cut-and-choose rule} as the rule in which one pre-specified agent (say, Alice) cuts the cake into two pieces equal in her eyes, the other agent (Bob) picks the piece that he prefers, and the first agent receives the remaining piece. The following example shows that this rule is not resource-monotonic. In the examples below, the $\blacktriangledown$ sign over a column indicates the enlargement, and the colored cells in an agent's row indicate the agent's piece.

\begin{center}
\begin{tabular}{lcccc}
     &  &    & & \\
\hline
\multicolumn{1}{|c||}{$\vabs_A$} &
\multicolumn{1}{|c}{\cellcolor{myGreen!25}1} &
\multicolumn{1}{|c}{\cellcolor{myGreen!25}1} &
\multicolumn{1}{|c}1  & \multicolumn{1}{|c|}1
\\
\hline
\multicolumn{1}{|c||}{$\vabs_B$} &
\multicolumn{1}{|c}1 &
\multicolumn{1}{|c}1 & \multicolumn{1}{|c}{\cellcolor{myBlue!25}$3$}  & \multicolumn{1}{|c|}{\cellcolor{myBlue!25}$3$}
\\
\hline
\end{tabular}
  \quad\quad
 \begin{tabular}{lccccc}
       &  &    &  & &  $\blacktriangledown$ \\
\hline
\multicolumn{1}{|c||}{$\vabs_A$} &
\multicolumn{1}{|c}{\cellcolor{myGreen!25}1} &
\multicolumn{1}{|c}{\cellcolor{myGreen!25}1} &
\multicolumn{1}{|c}{\cellcolor{myGreen!25}1} &
\multicolumn{1}{|c}1 &
\multicolumn{1}{|c|}2
\\
\hline
\multicolumn{1}{|c||}{$\vabs_B$} &
\multicolumn{1}{|c}1 &
\multicolumn{1}{|c}1 &
\multicolumn{1}{|c}{$3$}&
\multicolumn{1}{|c}{\cellcolor{myBlue!25}$3$} &
\multicolumn{1}{|c|}{\cellcolor{myBlue!25}2}
\\
\hline
\end{tabular}
\end{center}
When the extra piece is not present (left), Alice cuts the cake after the second slice, allowing Bob to choose the piece worth $6$ for him. However, when the cake is enlarged, Alice cuts after the third slice and Bob's utility drops to $5$.

This example implies that the Banach-Knaster, Dubins-Spanier, Even-Paz and the Fink methods are not resource-monotonic either, as they all produce the same divisions on the above cake as Cut and Choose.%
\footnote{A more recent protocol, the Recursive Cut and Choose, proposed by \cite{Tasnadi2003}, violates resource-monotonicity for the same reason.}

Finally we examine the Selfridge-Conway envy-free protocol for three agents. This protocol has a pre-specified \emph{cutter} (who cuts the cake to three equal pieces) and a pre-specified \emph{trimmer} (who trims his best piece to make it equal to his second-best piece). W.l.o.g, we define the \emph{Selfridge-Conway-rule} as the rule in which Alice is the cutter and Bob is the trimmer. The following example shows that this rule is not RM.

\begin{center}
  \centering
    \begin{tabular}{lcccccc}
    & &  &     \\
    \hline
    \multicolumn{1}{|c||}{$\vabs_A$} &
    \multicolumn{1}{|c}4 &
    \multicolumn{1}{|c?{1mm}}2 &
    \multicolumn{1}{c}2 &
    \multicolumn{1}{|c?{1mm}}4 &
    \multicolumn{1}{c}{\cellcolor{myGreen!25}4} & \multicolumn{1}{|c|}{\cellcolor{myGreen!25}2}  \\
    \hline
    \multicolumn{1}{|c||}{$\vabs_B$} &\multicolumn{1}{|c}{\cellcolor{myBlue!25}5} & \multicolumn{1}{|c?{1mm}}{\cellcolor{myBlue!25}2} & \multicolumn{1}{c}3&
    \multicolumn{1}{|c?{1mm}}4&
    \multicolumn{1}{c}1&
    \multicolumn{1}{|c|}1   \\
    \hline
    \multicolumn{1}{|c||}{$\vabs_C$} &
    \multicolumn{1}{|c}1 &
    \multicolumn{1}{|c?{1mm}}2&
    \multicolumn{1}{c}{\cellcolor{myOrchid!25}4} &
    \multicolumn{1}{|c?{1mm}}{\cellcolor{myOrchid!25}4}& \multicolumn{1}{c|}1&
    \multicolumn{1}{|c|}1   \\
    \hline
  \end{tabular}
  \quad\quad
      \begin{tabular}{lccccccc}
      & &  &  & & & &  $\blacktriangledown$ \\
    \hline
    \multicolumn{1}{|c||}{$\vabs_A$} &
    \multicolumn{1}{|c}4 &
    \multicolumn{1}{|c}2 &
    \multicolumn{1}{|c?{1mm}}2 &
    \multicolumn{1}{c}4 &
    \multicolumn{1}{|c?{1mm}}4 & \multicolumn{1}{c}{\cellcolor{myGreen!25}2} & \multicolumn{1}{|c|}{\cellcolor{myGreen!25}6}\\
    \hline
    \multicolumn{1}{|c||}{$\vabs_B$} & \multicolumn{1}{|c?{.5mm}}{\cellcolor{myBlue!25}5} & \multicolumn{1}{|c}{2} &
    \multicolumn{1}{|c?{1mm}}{3} &
    \multicolumn{1}{c}{4} &
    \multicolumn{1}{|c?{1mm}}1 &
    \multicolumn{1}{c}{1} &
    \multicolumn{1}{|c|}1 \\
    \hline
    \multicolumn{1}{|c||}{$\vabs_C$} &
    \multicolumn{1}{|c}1 &
    \multicolumn{1}{|c}2 &
    \multicolumn{1}{|c?{1mm}}{4}&
    \multicolumn{1}{c}{\cellcolor{myOrchid!25}4}& \multicolumn{1}{|c?{1mm}}{\cellcolor{myOrchid!25}1} &
    \multicolumn{1}{c}{1} &
    \multicolumn{1}{|c|}1 \\
    \hline
  \end{tabular}
\end{center}
In the smaller cake (left), Alice cuts the cake into three parts worth 6 to her, each made of two adjacent slices. The two most valuable parts for Bob are worth the same (7) so he passes. Then the agents choose in the order Carl, Bob, Alice. Carl's utility is 8.

In the larger cake (right), Alice cuts three parts worth 8 to her, made of 3, 2 and 2 slices. The leftmost part is most valuable for Bob, so he trims it to make it equal to the middle part. Carl takes the uncut part, which is worth 5 for him. Now, Carl divides the remainder to 3 equal pieces and the agents choose in order: Bob, Alice, Carl. Carl receives a piece worth at most 2, so his total utility is at most 7.

\subsection{Population-monotonicity}
Population-monotonicity is not applicable to protocols with fixed number of agents, such as Cut and Choose and Selfridge-Conway.
The following example shows that the Dubins-Spanier moving-knife protocol is not PM.
In the next couple of examples the cells of the leaving player are colored gray.

\begin{table}[h!]
  \centering
    \begin{tabular}{lcccccc}
    \hline
    \multicolumn{1}{|c||}{$\vabs_A$} &
    \multicolumn{1}{|c}{\cellcolor{myGreen!25}20} &
    \multicolumn{1}{|c}1 &
    \multicolumn{1}{|c}1 &
    \multicolumn{1}{|c}1 &
    \multicolumn{1}{|c}{10} &
    \multicolumn{1}{|c|}{27}
    \\
    \hline
     \multicolumn{1}{|c||}{$\vabs_B$} &  \multicolumn{1}{|c}1 &
     \multicolumn{1}{|c}{\cellcolor{myBlue!25}20} &
     \multicolumn{1}{|c}{\cellcolor{myBlue!25}10} &
     \multicolumn{1}{|c}{28} &
     \multicolumn{1}{|c}1 &
     \multicolumn{1}{|c|}1
     \\
    \hline
    \multicolumn{1}{|c||}{$\vabs_C$} &
    \multicolumn{1}{|c}1 &
    \multicolumn{1}{|c}1 &
    \multicolumn{1}{|c}{18} &   \multicolumn{1}{|c}{\cellcolor{myOrchid!25}{10}} &
    \multicolumn{1}{|c}{\cellcolor{myOrchid!25}{29}} &
    \multicolumn{1}{|c|}{\cellcolor{myOrchid!25}1}\\
    \hline
  \end{tabular}
    \quad\quad
      \begin{tabular}{lcccccc}
    \hline
    \multicolumn{1}{|c||}{$\vabs_A$} & \multicolumn{1}{|c}{20} &
    \multicolumn{1}{|c}1 &
    \multicolumn{1}{|c}1 &
    \multicolumn{1}{|c}1 & \multicolumn{1}{|c}{\cellcolor{myGreen!25}10} &
    \multicolumn{1}{|c|}{\cellcolor{myGreen!25}27} \\
    \hline
    \multicolumn{1}{|c||}{\en $\vabs_B$} & \multicolumn{1}{|c}{\en 1} & \multicolumn{1}{|c}{\en 20} & \multicolumn{1}{|c}{\en 11} & \multicolumn{1}{|c}{\en 28} & \multicolumn{1}{|c}{\en 1} & \multicolumn{1}{|c|}{\en 1}\\
    \hline
    \multicolumn{1}{|c||}{$\vabs_C$} &
    \multicolumn{1}{|c}{\cellcolor{myOrchid!25}1} &
    \multicolumn{1}{|c}{\cellcolor{myOrchid!25}1}&
    \multicolumn{1}{|c}{\cellcolor{myOrchid!25}18} &
    \multicolumn{1}{|c}{\cellcolor{myOrchid!25}10} &
    \multicolumn{1}{|c}{29} &
    \multicolumn{1}{|c|}{1}\\
    \hline
  \end{tabular}
\end{table}
When all three agents are present, Alice is the first to stop the knife and get a piece. In the second round, Bob stops the knife and gets a piece, and finally Carl receives the reminder which is worth for him 40.
However, if Bob is not present then Carl will be the first to stop the knife and his value will be 30.

The Even-Paz method and the Banach-Knaster protocol (when agents are ordered Alice-Bob-Carl) produce the same allocations. Thus, none of these three methods is population-monotonic.

Consider now the Fink procedure. This procedure was specifically designed with upwards-population-monotonicity in mind: when a new agent joins an existing division, he takes a proportional share from each of the existing agents, so all existing agents are weakly worse off; they all participate in supporting the new agent, which is what PM is all about. However, the Fink procedure is not downwards-PM, as the following example shows:

\begin{table}[h!]
  \centering
    \begin{tabular}{lccccccc}
    \hline
    \multicolumn{1}{|c||}{$\vabs_A$} & \multicolumn{1}{|c}{\cellcolor{myGreen!25}2} & \multicolumn{1}{|c}{\cellcolor{myGreen!25}2} & \multicolumn{1}{|c}{2} & \multicolumn{1}{|c}{2} &  \multicolumn{1}{|c}{1} &\multicolumn{1}{|c}{1} & \multicolumn{1}{|c|}{2} \\
    \hline
    \multicolumn{1}{|c||}{$\vabs_B$} & \multicolumn{1}{|c}0 & \multicolumn{1}{|c}0 & \multicolumn{1}{|c}0 & \multicolumn{1}{|c}{\cellcolor{myBlue!25}4} & \multicolumn{1}{|c}{\cellcolor{myBlue!25}2} & \multicolumn{1}{|c}{\cellcolor{myBlue!25}2} & \multicolumn{1}{|c|}4  \\
    \hline
    \multicolumn{1}{|c||}{$\vabs_C$} & \multicolumn{1}{|c}0 & \multicolumn{1}{|c}0 & \multicolumn{1}{|c}{\cellcolor{myOrchid!25}0} & \multicolumn{1}{|c}2 & \multicolumn{1}{|c}1 & \multicolumn{1}{|c}1 & \multicolumn{1}{|c|}{\cellcolor{myOrchid!25}2} \\
    \hline
  \end{tabular}
  \quad\quad
    \begin{tabular}{lccccccc}
    \hline
    \multicolumn{1}{|c||}{\en $\vabs_A$} & \multicolumn{1}{|c}{\en 2} & \multicolumn{1}{|c}{\en 2} & \multicolumn{1}{|c}{\en 2} & \multicolumn{1}{|c}{\en 2} &  \multicolumn{1}{|c}{\en 1} &\multicolumn{1}{|c}{\en 1} & \multicolumn{1}{|c|}{\en 2} \\
    \hline
    \multicolumn{1}{|c||}{$\vabs_B$} & \multicolumn{1}{|c}{\cellcolor{myBlue!25}0} & \multicolumn{1}{|c}{\cellcolor{myBlue!25}0} & \multicolumn{1}{|c}{\cellcolor{myBlue!25}0} & \multicolumn{1}{|c}{\cellcolor{myBlue!25}4} & \multicolumn{1}{|c}{\cellcolor{myBlue!25}2} & \multicolumn{1}{|c}{2} & \multicolumn{1}{|c|}4  \\
    \hline
    \multicolumn{1}{|c||}{$\vabs_C$} & \multicolumn{1}{|c}0 & \multicolumn{1}{|c}0 & \multicolumn{1}{|c}{0} & \multicolumn{1}{|c}2 & \multicolumn{1}{|c}1 & \multicolumn{1}{|c}{\cellcolor{myOrchid!25}1} & \multicolumn{1}{|c|}{\cellcolor{myOrchid!25}2} \\
    \hline
  \end{tabular}
\end{table}

Suppose that initially Alice and Bob use Cut and Choose and Bob is the chooser. He is able to salvage the whole cake according to his own evaluation. Now they divide their pieces into three equal parts, and Carl gets to choose one slice from each of them. Hence, Bob ends up with a piece worth at least 8 for him. But if Alice leaves, then Bob and Carl have to redivide the cake using Cut and Choose. Then, no matter who cuts, Bob ends up with only 6.

The above example seemingly contradicts our claim that upwards-PM implies downwards-PM and vice versa for single valued solutions. However, there is a subtle difference here. The Fink procedure is based on a predefined order of the agents, and it is only upwards monotonic if the new agent is the last in the order. An alternative explanation is to treat the Fink rule as a set-valued rule, which returns $n!$ possible allocations, for all $n!$ possible orderings of the agents.  Under this definition, the Fink rule is upwards-PM, but not downwards-PM as shown in the example.

\section{Negative Results}
\label{sec:negative}

When the agents do not care about connectivity,
the ideal division rule (at least in terms of fairness) is the \emph{Nash-optimal rule}, which maximizes the product of utilities: it is RM, PM, PO and EF, hence also proportional \citep{ourArxivPaperAdditive}.
Moreover, for every Nash-optimal allocation there exists a price-vector that is a competitive-equilibrium from equal-incomes (CEEI). Therefore, it makes sense to ask whether Nash-optimality and/or CEEI have all these desirable properties with connectivity too. Unfortunately, the answer is no.

Consider first the Nash-optimal rule and the following cake:
	\begin{center}
		\begin{tabular}{|l||c|c|c|c|c|c|}
			\hline  $\vabs_A$ & 2 & 2 & 2 & 2 & 2 & 2  \\
			\hline  $\vabs_B$ & 1 & 1 & 4 & 4 & 1 & 1  \\
			\hline
		\end{tabular}
	\end{center}
Without connectivity, it gives the two central slices to Bob and the four peripheral slices to Alice. The Nash-welfare is 8*8=64. The allocation is EF and PROP.
Moreover, it is supported in a competitive-equilibrium from equal-incomes, in which the price of a central slice is 2 and the price of a peripheral slice is 1 and the income of both agents is 4.

In contrast, with connectivity, the Nash-optimal rule is not proportional. To see this, observe that in both of the connected proportional divisions, Alice and Bob each get three slices and a value of 6, so the Nash-welfare is 36. However, when Bob gets four slices and Alice two slices, the Nash-welfare is 40. Hence neither of the two possible proportional allocations is Nash-optimal.

Moreover, with connectivity, a CEEI allocation might not exist at all. Recall that any CEEI allocation is both PO and EF; in following cake, no PO+EF allocation exists:
\begin{example}
\label{exm:no-peef}
[No connected allocation is both PO and EF]:\\
\begin{center}
\begin{tabular}{|l||c|c|c|c|c|c|c|}
\hline     $\vabs_A$ & 2 & 0 & 3 & 0 & 2 & 0 & 0 \\
\hline $\vabs_B$ & 0 & 0 & 0 & 0 & 0 & \cellcolor{blue!30}{7} & 0 \\
\hline     $\vabs_C$ & 0 & 2 & 0 & 2 & 0 & 0 & 3 \\
\hline
\end{tabular}
\end{center}
EF requires to give Bob a part of his 7 slice;
PO then requires to give him his entire 7 slice. Carl's piece can then be either at Bob's left or at Bob's right:
\begin{itemize}
\item If Carl is at Bob's left and his utility is 2, then the allocation is not PO since Carl can be moved to the rightmost slice and get a utility of 3 without harming any other agent.
\item If Carl is at Bob's left and his utility is more than 2, then Alice envies him since he holds her 3 slice while here utility is at most 2.
\item If Carl is at Bob's right, then by PO the entire left is given to Alice, but then Carl envies her.
\end{itemize}
\qed
\end{example}
Since both the Nash-optimal and the CEEI rules fail in the presence of connectivity requirements, we have to look for different rules.
But first we show that, with connectivity requirements, Pareto-optimality and proportionality are incompatible with resource-monotonicity:
\begin{theorem}
\label{exm:con-prop-poca-am-incompatible}
When there are two or more agents with connected utilities, any division rule which is proportional and Pareto-optimal cannot be resource-monotonic.
\end{theorem}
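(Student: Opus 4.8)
The plan is to prove the theorem by a counterexample: exhibit a two‑agent instance $\Gamma$ together with a cake‑enlargement $\Gamma'$ on which \emph{every} proportional Pareto‑optimal division of $\Gamma'$ fails to weakly Pareto‑dominate \emph{every} proportional Pareto‑optimal division of $\Gamma$, and then reduce the general‑$n$ case to this one by padding. The reason for insisting on the ``every/every'' form is that a rule may be set‑valued and the monotonicity axioms demand only one compatible selection, so it is not enough to track one division per cake: one must control the whole proportional‑and‑Pareto‑optimal set of each problem. Once the two PROP+PO sets are incompatible in this sense, then for any PROP+PO rule $R$ and any $X\in R(\Gamma)$ there is no $Y\in R(\Gamma')$ with $\vabs_i(Y_i)\ge\vabs_i(X_i)$ for all $i$, so $R$ is not upwards‑resource‑monotonic, hence not RM.

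The key steps, in order. (1) Record the shape of PROP+PO divisions for two agents with connected utilities: up to indifference a division is a single cut point plus an orientation; Pareto‑optimality forbids sliding the cut across an interval one agent values at zero while the other values positively; and proportionality imposes $\vabs_i(X_i)\ge\tfrac12\vabs_i(C)$. Hence the attainable pairs $(\vabs_A(X_A),\vabs_B(X_B))$ form a union of at most two monotone arcs. (2) Build $\Gamma$ so that whichever admissible cut is chosen, Pareto‑optimality forces a large \emph{absolute} value onto one agent, say Bob: for instance let Alice's value density live on a small interval, so that once Alice holds her proportional share the remainder of the cake, valuable to Bob, must go to Bob — forcing $\vabs_B(X_B)\ge\beta$ for every PROP+PO division $X$ of $\Gamma$, with $\beta$ strictly above $\tfrac12\vabs_B(C)$. (3) Enlarge on the right so that in $\Gamma'$ both total values rise, making proportionality a genuinely stricter absolute demand, and show, by the same cut‑point analysis, that every PROP+PO division $Y$ of $\Gamma'$ has $\vabs_B(Y_B)<\beta$; the enlargement should be in the spirit of the Cut‑and‑Choose counterexample of Section~\ref{sec:classic-protocols}, where the cut a fair rule is pushed towards migrates into the region Bob values once the cake grows. (4) For $n\ge 3$, append $n-2$ pairwise‑disjoint private intervals at the left end, one per extra agent and valued only by its owner; proportionality pins each extra agent's share to essentially their interval, and after rescaling the valuations so the $1/n$‑threshold on the padded instance reproduces the $1/2$‑threshold on the core interval, the remaining sub‑problem is exactly $\Gamma$ (respectively $\Gamma'$).

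The hard part is step (3), together with the choice of example underlying (2)–(3). Enlarging a cake can only enlarge the set of achievable utility profiles, so ``adding cake hurts nobody'' is the default, and the example must be tuned so that the interaction of connectivity with the \emph{relative} (normalized) form of proportionality forces some agent strictly down in \emph{every} admissible division of $\Gamma'$ — not merely in the particular division that a given rule happens to return. If a single pair $(\Gamma,\Gamma')$ whose PROP+PO sets are fully incompatible turns out to be unavailable — because enlargement never hurts an agent at the extreme points of the Pareto frontier — the fallback is to fix the rule $R$ first, read off which of the two arcs of $\Gamma$'s frontier $R(\Gamma)$ lies on, and pick the enlargement accordingly, handling the Alice‑favouring and the Bob‑favouring selections by two symmetric enlargements, at the cost of a short case split.
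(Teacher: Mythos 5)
Your overall strategy is exactly the paper's: exhibit a two-agent pair $(\Gamma,\Gamma')$ whose sets of proportional Pareto-optimal divisions are incompatible in the every/every sense, so that no selection from a PROP+PO rule can satisfy upwards resource-monotonicity. Your explanation of why the every/every form is needed for set-valued rules is correct and careful, and your padding argument for $n\ge 3$ (which the paper actually omits) is fine in outline. However, the proposal stops exactly where the proof begins: the entire content of the theorem is the concrete instance, and you never produce one --- only a template, a list of desiderata, and a fallback plan in case the template fails. That is a genuine gap, because the feasibility of such an instance is precisely the point in doubt; as you yourself observe, enlarging the cake by default only enlarges the set of attainable utility profiles, so the existence claim cannot be waved through.

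For the record, the paper's instance fits your template and shows your fallback is unnecessary. Take four equal slices with $\vabs_A=(6,0,1,1)$ and $\vabs_B=(0,4,2,2)$; PROP+PO forces essentially a unique outcome (Alice gets the first slice, value $6$; Bob gets the rest, value $8$), so no case split over arcs of the Pareto frontier arises. The enlargement appends a fifth slice worth $6$ to Alice and $0$ to Bob. Two devices here are ones your sketch does not quite pin down: (i) the added value goes \emph{entirely} to Alice, so only her absolute proportionality threshold rises (from $4$ to $7$) while Bob's stays at $4$ --- your stipulation that ``both total values rise'' points in a less clean direction, since raising Bob's threshold works against you; and (ii) Alice's value is now split between the two extreme ends of the cake, so connectivity forbids her from collecting both, and to reach $7$ she must either invade Bob's middle region (leaving him at most $2$) or relocate to the right end (leaving him at most $6$) --- in either case Bob drops below $8$. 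Note also that on the enlarged cake only PROP is invoked, not PO. Without a worked instance of this kind and the accompanying verification, the argument is a plan rather than a proof.
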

\begin{proof}
Consider the following cake, where the enlargement is marked by the $\blacktriangledown$ sign:

\begin{center}
	\begin{tabular}{lccccc}
         & &   &   \\
		\hline
		\c{$\vabs_A$} & \c{6} & \c{0} & \c{1} & \c{1}  \\
		\hline
		\c{$\vabs_B$} & \c{0} & \c{4} & \c{2} & \c{2}  \\
		\hline
	\end{tabular}
\quad\quad
	\begin{tabular}{lccccc}
         & &   & & & $\blacktriangledown$ \\
		\hline
		\c{$\vabs_A$} & \c{6} & \c{0} & \c{1} & \c{1} & \c{6} \\
		\hline
		\c{$\vabs_B$} & \c{0} & \c{4} & \c{2} & \c{2} & \c{0}  \\
		\hline
	\end{tabular}
\end{center}
In the smaller cake, any PROP+PO rule must give the leftmost slice to Alice and the rest of the cake to Bob. Hence, Alice's utility is 6 and Bob's utility is 8. In the larger cake, any PROP rule must give Alice a utility of at least 7. This leaves Bob a utility of at most 6.
\end{proof}

Moreover, Pareto-optimality and proportionality are incompatible with population-monotonicity, too:

\begin{theorem}
\label{exm:con-prop-poca-pm-incompatible}
	When agents have connected utilities, any division rule which is proportional and Pareto-optimal cannot be population-monotonic.
\end{theorem}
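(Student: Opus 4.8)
The plan is to transplant the idea behind the proof of Theorem~\ref{exm:con-prop-poca-am-incompatible} from a cake-change to a population-change: I would display one fixed cake with three agents together with the two-agent problem obtained by deleting one of them, and show that the proportional-and-Pareto-optimal outcome is forced to move in a direction monotonicity forbids. Concretely I would take a cake of three slices and agents Alice, Bob, Carl, with piecewise-constant value densities
\[
\vabs_A=(3,\,0,\,1),\qquad \vabs_B=(3,\,0,\,2),\qquad \vabs_C=(0,\,3,\,0),
\]
so that the middle slice is worthless to Alice and Bob and is the only slice Carl cares about; Carl then acts as a ``buffer'' separating Alice's good (left) slice from Bob's good (right) slice.

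Step~1: with all three agents, every PROP+PO division gives Alice exactly $3$. Pareto-optimality forces the entire middle slice to Carl, because any portion of it not held by Carl is worth $0$ to Alice and to Bob and can be transferred to Carl as a Pareto improvement; hence Carl's interval contains the middle slice, and by connectivity Carl sits between the other two, so Alice and Bob each get one of the outer slices. Alice's proportional share is $\tfrac13\cdot 4=\tfrac43>1$, which the right slice alone cannot provide, so Alice must get the left slice; and since any unused part of that slice is worthless to Carl and unreachable by Bob, Pareto-optimality forces Alice to take all of it, giving $\vabs_A(X_A)=3$ (and $\vabs_B(X_B)=2$, $\vabs_C(X_C)=3$) in every PROP+PO division $X$ of the three-agent problem.

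Step~2: after Carl leaves, every PROP+PO division $Y$ of the resulting two-agent cake has $\vabs_A(Y_A)\le\tfrac52$. Now the binding constraint is Bob's proportional share, which has grown to $\tfrac12\cdot 5=\tfrac52$, strictly above the value $2$ that Bob's slice alone is worth; so in any PROP+PO division Bob must take his whole slice plus a positive fraction of Alice's slice (the reverse ordering being infeasible for proportionality, since then Alice would be left with value at most $1$). Writing the two proportionality inequalities for this configuration pins Alice's value to the interval $[2,\tfrac52]$ over all PROP+PO divisions of the two-agent problem.

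Step~3 (conclusion): a PROP+PO rule $R$ returns, for the three-agent problem $\Gamma$, only divisions with $\vabs_A=3$, but for its population-reduction $\Gamma'$ (Carl removed) only divisions with $\vabs_A\le\tfrac52<3$. Hence for $X\in R(\Gamma)$ there is no $Y\in R(\Gamma')$ with $\vabs_A(Y_A)\ge\vabs_A(X_A)$, so $R$ fails downwards population-monotonicity; symmetrically, for $Y\in R(\Gamma')$ there is no $X\in R(\Gamma)$ with $\vabs_A(X_A)\le\vabs_A(Y_A)$, so $R$ fails upwards population-monotonicity as well. The main obstacle is Steps~1 and~2 — the finite case analyses enumerating which slice each agent may receive and ruling out the configurations that violate connectivity, proportionality, or Pareto-optimality — but with only three slices, one of which two of the agents value at zero, each analysis is short and essentially forced.
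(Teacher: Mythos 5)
Your proof is correct and takes essentially the same approach as the paper: exhibit a concrete piecewise-homogeneous cake on which every proportional and Pareto-optimal division is pinned down before and after one agent departs, and observe that a remaining agent's value is forced to drop strictly, violating (downwards) population-monotonicity. The only difference is the choice of counterexample --- you use a simpler three-slice cake with non-normalized valuations, whereas the paper reuses the seven-slice cake of Example~\ref{exm:no-peef}; both case analyses are sound.
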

\begin{proof}
Consider again the cake of Example \ref{exm:no-peef}:
\begin{center}
\centering
\begin{tabular}{|l||c|c|c|c|c|c|c|}
\hline     $\vabs_A$ & 2 & 0 & 3 & 0 & 2 & 0 & 0 \\
\hline \en $\vabs_B$ & \en 0 & \en 0 & \en 0 & \en 0 & \en 0 & \en 7 & \en 0 \\
\hline     $\vabs_C$ & 0 & 2 & 0 & 2 & 0 & 0 & 3 \\
\hline
\end{tabular}
\end{center}
Note that all agents value the entire cake as 7, so by PROP each agent must receive a connected piece with a value of at least $2\frac{2}{3}$. Bob's piece must be in his 7 slice (by PROP) and most contain all this slice and nothing more (by PO).

Carl's piece can be either left or right of Bob's 7 slice. If Carl's piece is at Bob's left, then by PROP it must contain the two 2 slices of Carl. This leaves Alice a utility of at most 2, which violates PROP. So Carl's piece must be the Bob's right and Carl's utility is 3.

This leaves the entire region at Bob's left to Alice. By PO, she receives this entire region and her utility is 7.

Suppose Bob leaves. Now $n=2$, so Carl must get a value of at least $7/2 = 3.5$, so his piece must touch his middle "2" slice. But this leaves Alice a utility of at most 5.
\end{proof}

Despite the crucial importance of Pareto-optimality in economics, in our case it is problematic: it is incompatible with the stronger fairness criterion of envy-freeness. Even with the weaker criterion of proportionality, it is incompatible with any of the monotonicity axioms.

If we believe that a division rule is fair only if it satisfies both proportionality and monotonicity, we must compromise on efficiency. Some possible compromises are presented in the following section.

\section{Positive Results}\label{sec:positive}
\subsection{\textbf{Exactly-proportional rule: PROP+RM+PM}} \label{sub:exact-prop}
Our first division rule is both resource- and population-monotonic, but very inefficient. We present it merely as a benchmark for comparison with the more advanced rules that come later.

\begin{definition}
A division $X$ is called \emph{exactly-proportional} if it gives every agent a relative value of exactly $1/n$. Formally:
$\forall i\in N: \vrel_i(X_i)=1/n$.
\end{definition}

The \emph{exactly-proportional rule} returns an exactly-proportional division; such a division can be found, for example, using the following variant of the Banach-Knaster procedure \citep{Steinhaus1948}:
\begin{itemize}
\item Every agent marks a point $x_i$ such that $\vabs_i([0,x_i])=\frac{1}{n}$.
\item The procedure selects the leftmost point $x_{min}$ (breaking ties arbitrarily)\ and gives $[0,x_{min}]$ to the agent that made that mark.
\item The remaining agents divide the cake recursively in the same way (keeping the fraction $1/n$ fixed).
\item The cake that remains after the $n$-th step is discarded.
\end{itemize}

\begin{theorem}
The exactly-proportional rule is proportional and resource-monotonic and population-monotonic, but not weakly-Pareto-optimal.
\end{theorem}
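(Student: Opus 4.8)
The plan is to treat the four assertions in turn; the first three all reduce to one structural observation, and only the last one needs an explicit example.

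The observation is that the rule is essentially single-valued: if $X$ is any exactly-proportional division of $\Gamma=(N,C,(\vabs_i))$ then $\vabs_i(X_i)=\vrel_i(X_i)\cdot\vabs_i(C)=\vabs_i(C)/n$ for every $i$, so the absolute utility profile is determined by $\Gamma$ alone. Proportionality is then immediate, since $\vrel_i(X_i)=1/n\ge 1/n$. Before using this I would also check that the rule is non-empty-valued, i.e.\ that the Banach--Knaster variant always terminates with an exactly-proportional division: each requested $1/n$-mark exists because $\vrel_i([0,\cdot])$ is continuous (the measures are atomless) and runs from $0$ to $1$, and after $k<n$ rounds every still-present agent assigns relative value at most $k/n\le(n-1)/n$ to the already-removed left part, so a fresh $1/n$-mark still exists in what remains; the leftover after the $n$-th round is discarded.

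For resource-monotonicity let $\Gamma'$ be a cake-enlargement of $\Gamma$ (the argument works whenever $C\subseteq C'$). Both $R(\Gamma)$ and $R(\Gamma')$ are non-empty by the above, and for any $X\in R(\Gamma)$, $Y\in R(\Gamma')$ and any $i$ we have $\vabs_i(Y_i)=\vabs_i(C')/n\ge\vabs_i(C)/n=\vabs_i(X_i)$, using monotonicity of a measure under inclusion ($\vabs_i(C')=\vabs_i(C)+\vabs_i(C'\setminus C)$). Reading this as ``for every $X\in R(\Gamma)$ there is some $Y\in R(\Gamma')$ (any one) with $\vabs_i(Y_i)\ge\vabs_i(X_i)$ for all $i$'' gives upwards-RM, and reading it as ``for every $Y\in R(\Gamma')$ there is some $X\in R(\Gamma)$ with $\vabs_i(X_i)\le\vabs_i(Y_i)$ for all $i$'' gives downwards-RM. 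Population-monotonicity is identical with $\vabs_i(C)$ held fixed and $n$ replaced by $n'\le n$ for the remaining agents $i\in N'$: each such agent moves from $\vabs_i(C)/n$ to $\vabs_i(C)/n'\ge\vabs_i(C)/n$, giving both directions of PM at once.

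Finally, weak-Pareto-optimality fails already for two agents on the two-slice piecewise-homogeneous cake

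\begin{center}
\begin{tabular}{|l||c|c|}
\hline $\vabs_A$ & 1 & 0 \\
\hline $\vabs_B$ & 0 & 1 \\
\hline
\end{tabular}
\end{center}

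\noindent where $\vabs_A(C)=\vabs_B(C)=1$, so every exactly-proportional division hands each agent absolute utility exactly $1/2$, whereas the division giving the first slice to Alice and the second slice to Bob gives each of them utility $1$; hence every division the rule can return on this cake is strictly worse for both agents than some alternative, i.e.\ not weakly-Pareto-optimal. The only part of the argument needing any care is keeping the set-valued ``there exists'' quantifiers of the RM and PM definitions straight, and the essential-single-valuedness observation is exactly what makes that bookkeeping trivial --- once all divisions in $R(\Gamma)$ carry the same utility profile, every existential obligation is discharged by an arbitrary element of $R(\Gamma)$ --- so I do not expect a genuine obstacle.
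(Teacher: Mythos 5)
Your proof is correct and takes essentially the same route as the paper's: proportionality is definitional, RM follows because each agent gets the fixed fraction $1/n$ of a weakly larger whole, PM because each remaining agent gets a larger fraction of the same whole, and WPO fails on the same two-slice cake with disjoint supports (the paper uses densities $2$ and $0$ where you use $1$ and $0$). The extra care you take --- making the essential-single-valuedness explicit to discharge the existential quantifiers, and verifying that the Banach--Knaster variant actually terminates with an exactly-proportional division --- only fills in details the paper leaves implicit.
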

\begin{proof}
PROP is obvious by definition.

RM holds because when the cake grows/shrinks, all agents receive the same fraction of a larger/smaller whole.

PM holds because when an agent leaves/joins, the remaining agents receive a larger/smaller fraction of the same whole.

The following cake shows that the rule is not WPO:
	\begin{center}
		\begin{tabular}{|l|c|c|}
			\hline
			$\vabs_A$ & 2 & 0 \\
			\hline
			$\vabs_B$ & 0 & 2\\
			\hline
		\end{tabular}
	\end{center}
An exactly-proportional division must give each agent a utility of exactly 1, yet it is possible to give each agent a utility of 2.
\end{proof}
In essence, the exactly-proportional rule tells the agents ``keep your happiness at the minimum proportional level of exactly $1/n$, so that when new resources become available, you can only become happier''. This guarantees PROP and RM and PM, but it is very inefficient.

\subsection{\textbf{Relative-equitable rule: WPO+PM+PROP}} \label{sub:equitable}
We present a population-monotonic division rule based on the notion of \emph{equitable cake divisions}. The idea of an equitable division is that all agents are equally happy --- each agent receives a piece with the same personal value.

\begin{definition}
(a) A cake division $X$ is called \emph{relative-equitable} if all agents receive exactly the same relative value. Formally: $\vrel_i(X_i) = \vrel_j(X_j)$ for all $i,j \in N$. This value is called the \emph{relative-equitable value} of the division.

(b) A relative-equitable division is called \emph{max-relative-equitable} if its relative-equitable value is weakly larger than of all relative-equitable divisions.

\end{definition}

\begin{definition}
(a) An \emph{agent-ordering}, denoted by $\pi$, is a permutation on the set of agents $N$.

(b) A connected partition of the cake into $n$ intervals is called a \emph{$\pi$-partition} if the intervals are assigned to the $n$ agents in the order specified by $\pi$.
\end{definition}

For example, a $\overline{132}$-partition is a connected partition in which Agent 1 receives the leftmost piece, Agent 3 receives the middle piece and Agent 2 receives the rightmost piece.

Independently of ours, some of the following lemmata were proved by \citet{Cechlarova2013Existence}. We mention each lemma that was previously proved. For completeness, we provide alternative proofs.

\begin{lemma}
\label{lemma:equitable-single-permutation}
For every agent-ordering $\pi$, there exists a relative-equitable $\pi$-partition.
\end{lemma}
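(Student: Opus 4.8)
The plan is to locate the $n-1$ cut points by a one-parameter ``cascading knives'' argument. Write $\pi = (\pi(1),\dots,\pi(n))$. For a candidate common value $\alpha\in[0,1]$, give agent $\pi(1)$ the shortest initial segment worth (relatively) $\alpha$ to her, then give $\pi(2)$ the shortest following segment worth $\alpha$ to him, and so on. Formally, set $t_0(\alpha)=0$ and, for $k=1,\dots,n$, let $t_k(\alpha)$ be the smallest $z\in[t_{k-1}(\alpha),c]$ with $\vrel_{\pi(k)}([t_{k-1}(\alpha),z])\ge\alpha$, provided such a $z$ exists. One checks that $\alpha\mapsto t_n(\alpha)$ is non-decreasing and equals $0$ at $\alpha=0$; the cake is split correctly exactly when the last cut lands on the right endpoint, i.e.\ when $t_n(\alpha^\ast)=c$, and then $\big(t_1(\alpha^\ast),\dots,t_{n-1}(\alpha^\ast)\big)$ is the desired relative-equitable $\pi$-partition, with equitable value $\alpha^\ast$.

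The main obstacle is that $\alpha\mapsto t_n(\alpha)$ need not be continuous: wherever some agent's value density vanishes on a sub-interval, the corresponding cut ``jumps'', and the cascade can leap over the target $c$, so that no $\alpha$ with $t_n(\alpha)=c$ exists. I would circumvent this by a perturbation-and-limit argument. First I would prove the claim when every $\vrel_i$ has a strictly positive density. Then each $\vrel_{\pi(k)}([y,\cdot])$ is continuous and strictly increasing, so $t_k(\alpha)$ is the unique solution of $\vrel_{\pi(k)}([t_{k-1}(\alpha),z])=\alpha$ and depends continuously and strictly-increasingly on $\alpha$ on the interval $[0,\bar\alpha]$ where the whole cascade is defined. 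If $t_n(\bar\alpha)<c$, then every defining equation holds with strict slack (positive density on a non-degenerate tail), so the cascade is still defined for $\alpha$ slightly above $\bar\alpha$, contradicting the maximality of $\bar\alpha$; hence $t_n(\bar\alpha)=c$ and we are done in this case.

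For arbitrary value measures I would apply the positive-density case to the mollified measures $\vrel_i^{\varepsilon}:=(1-\varepsilon)\,\vrel_i+\varepsilon\,\mathrm{Leb}|_{[0,c]}/c$, obtaining for each $\varepsilon>0$ a relative-equitable $\pi$-partition $t^{\varepsilon}=(t_1^{\varepsilon},\dots,t_{n-1}^{\varepsilon})$ with equitable value $\alpha_\varepsilon$ for $(\vrel_i^{\varepsilon})_{i\in N}$. Since the cut vectors lie in the compact set $\{0\le t_1\le\dots\le t_{n-1}\le c\}$ and $\alpha_\varepsilon\in[0,1]$, I would pass to a sequence $\varepsilon_m\to 0$ along which $t^{\varepsilon_m}\to t^{\ast}$ and $\alpha_{\varepsilon_m}\to\alpha^{\ast}$. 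Because each $\vrel_i$ is non-atomic, $(a,b)\mapsto\vrel_i([a,b])$ is continuous and $\vrel_i^{\varepsilon_m}([a_m,b_m])\to\vrel_i([a,b])$ whenever $(a_m,b_m)\to(a,b)$; applying this to the $n$ equalities $\vrel_{\pi(k)}^{\varepsilon_m}\big([t_{k-1}^{\varepsilon_m},t_k^{\varepsilon_m}]\big)=\alpha_{\varepsilon_m}$ and letting $m\to\infty$ yields $\vrel_{\pi(k)}([t_{k-1}^{\ast},t_k^{\ast}])=\alpha^{\ast}$ for every $k$. Since $t_0^{\ast}=0$ and $t_n^{\ast}=c$ survive the limit, $t^{\ast}$ defines a relative-equitable $\pi$-partition, completing the proof. (One can also avoid the mollification by keeping the cascade set-valued at the ambiguous levels and exploiting compactness together with the connectedness of the set of admissible cascades to argue that the last endpoint sweeps continuously through $c$; the perturbation route seems the least technical to write out in full, and the same limiting idea will be reused when we want a $\pi$ making the equitable partition proportional.)
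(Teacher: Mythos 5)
Your proof is correct, but it takes a genuinely different route from the paper's. The paper's own (new) proof is topological: it parametrizes connected partitions by points of the sphere $S^{n-1}$ in the $\ell_1$ norm, with signs encoding orientation, and applies the Borsuk--Ulam theorem to the $n-1$ differences $\sign(x_i)\,\vrel_i(X_i)-\sign(x_{i+1})\,\vrel_{i+1}(X_{i+1})$ to obtain a partition in which consecutive agents' values agree. Your cascading-knives sweep is closer in spirit to the generalized-inverse proof of Cechl\'arov\'a et al.\ that the paper cites, and also to the two-phase moving-knife procedure the paper describes immediately after the lemma: that procedure is essentially your cascade run continuously in time, with the ``frozen screen'' phase playing the role that your mollification plays, namely getting past sub-intervals of zero value density where $\alpha\mapsto t_n(\alpha)$ jumps. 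You correctly identify that discontinuity as the only obstacle, and your resolution --- prove the strictly-positive-density case by a continuity-plus-maximality argument, then mix each measure with a small multiple of Lebesgue measure and pass to a subsequential limit of the cut vectors and equitable values, using compactness of $\{0\le t_1\le\cdots\le t_{n-1}\le c\}$ and continuity of $(a,b)\mapsto \vrel_i([a,b])$ --- is sound. (The limit may produce degenerate pieces, but an empty interval is still an interval and the resulting partition is still relative-equitable, so the lemma as stated is unaffected; the paper's Borsuk--Ulam proof likewise permits $x_i=0$.) What the Borsuk--Ulam argument buys is brevity and the absence of any perturbation step; what your argument buys is elementarity --- no algebraic topology --- at the cost of a two-stage limiting construction.
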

\begin{proof}
A proof using generalized-inverse functions is given in Theorem 5 of \citet{Cechlarova2013Existence}.

A new and shorter proof, using the Borsuk-Ulam theorem, is given in Appendix \ref{sub:max-eq-proof}.
\end{proof}

Below, we present a moving-knife procedure that finds an equitable-connected division for any ordering of the agents. Note that an equitable-connected division cannot be found by a discrete procedure \citep{Cechlarova2012Computability}, so a moving-knife procedure is a natural alternative.

Without loss of generality, assume that $\pi$ is the order $1,\dots,n$. The procedure starts with the agents holding their knives at the leftmost end of the cake. There is a large screen where the current relative-equitable value is displayed, which is zero at the beginning. During the procedure the positions of the knives determine a division of the cake: the piece allotted to Agent $i$ is the piece between the knife of Agent $i$ and the knife of Agent $i-1$ (or for $i=1$ the leftmost end of the cake). The value on the screen increases continuously and all agents moves their knives to the right such that the value of each piece matches the value on the screen. This goes on until one of the following two things happen:

\begin{description}
  \item[(a)] The rightmost knife reaches the end of the cake.
  \item[(b)] The knife of an agent reaches the leftmost endpoint of an interval in which the value density of that particular agent is 0.
\end{description}

In case (a), the procedure stops and we have obtained an relative-equitable partition of the entire cake. The relative-equitable value is the value on the screen.

In case (b), the value on the screen is frozen temporarily and the procedure enters its second phase. Let $j$ be the rightmost agent whose piece is adjacent to a zero-value interval. So every agent who comes after $j$ in the predefined order can strictly increase the value of his piece by moving his knife to the right. We ask all agents starting from $j$ to move their knives to the right such that their value remains constant. This goes on until either (a) holds, or another agent $k>j$ reaches an interval of zero measure so (b) holds for that agent, or Agent $j$ reaches the end of his zero-measure interval. In the first case the procedure stops, in the second case the procedure continues at second phase with agent $k$, in the third case the procedure goes back to the first phase.

Since the rightmost knife is moving continuously and monotonically to the right, eventually it reaches the end of the cake and an equitable division is found.\footnote{Here we implicitly use the assumption that the value measures are bounded. If $\vabs_i([0,c])$ were infinite then the rightmost knife could move to the right indefinitely without reaching the end of the cake by slowing down.}
\qed

\begin{figure}[!h]
  \centering
  \includegraphics[width=13.5 cm]{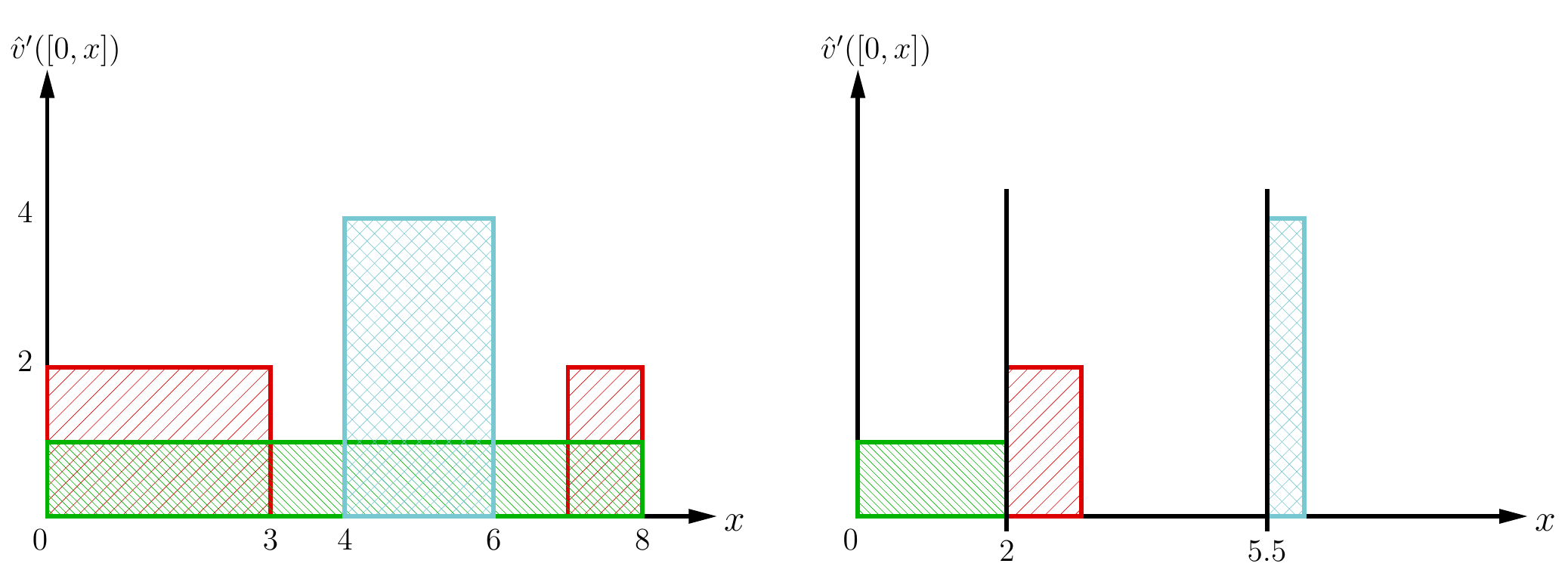}\\
  \caption{The moving knife procedure described in Lemma \ref{lemma:equitable-single-permutation} applied in a cake-division with three agents: Green (densely striped), Red (sparsely striped), Blue (grid).
  \textbf{Left:} the agents' valuations.
  \textbf{Right:} the resulting equitable division
  }

  \label{max-equitable-figure2}
\end{figure}

We demonstrate the somewhat informal description of the above moving-knife procedure on an example.
\begin{example}
Consider the piecewise homogeneous cake depicted in Figure \ref{max-equitable-figure2}. Three agents: Green, Red and Blue seek an equitable division of the cake, which has total value of 8 for each of them (this is a special case where the same relative value indicates the same absolute value). They agree on using the above procedure with the order Green, Red, Blue. Immediately at the beginning, we are at case (b) because Blue's knife is at a zero-value region. Thus, we enter phase 2 and Blue's knife moves to $x=4$. Then we return to phase 1.

As the knives move to the right, the agents increase the value of their pieces until they reach a relative value of $2/8$. At that moment Green's knife rests at $x=2$, Red's knife at $x=3$ and Blue's knife at $x=4.5$. The value displayed at the screen becomes fixed at this point since Red reached a zero-value interval. As Red gradually increases his piece, Blue moves his knife to the right making sure his value does not change. This continues until Blue himself reaches $x=6$, which is the start of an interval of zero value. Red stops his knife at $x=5.5$, but Blue continues until he reaches the right end of the cake. The resulting division is relative-equitable with value 2/8.
\end{example}

Let $X$ be a certain division of a cake. We denote the smallest relative value obtained by an agent by $\vrel^X_{min}$ and the largest by $\vrel^X_{max}$, such that for all $i=1,\dots,n$: $\vrel^X_{min}\leq \vrel_i(X_i)\leq \vrel^X_{max}$. Note that $\vrel^X_{min} = \vrel^X_{max}$ if and only if $X$ is a relative-equitable division.

\begin{lemma}
\label{lemma:equitable-min-max-value}
Let $\pi$ be an agent-ordering and $X$ a $\pi$-partition of a cake. Let $Y$ be a relative-equitable $\pi$-partition of the same cake, having a relative-equitable value $\vrel^Y$. Then: $\vrel^X_{min} \leq \vrel^Y \leq \vrel^X_{max}$.
\end{lemma}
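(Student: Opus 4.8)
The plan is to encode the two $\pi$-partitions by their cut points and run a monotonicity induction. Since the statement concerns only the single ordering $\pi$, I would first relabel the agents so that $\pi$ is the identity; then $X$ is determined by cut points $0 = x_0 \le x_1 \le \dots \le x_n = c$ with $X_i = [x_{i-1},x_i]$, and $Y$ by $0 = y_0 \le y_1 \le \dots \le y_n = c$ with $Y_i = [y_{i-1},y_i]$, and by hypothesis $\vrel_i(Y_i)=\vrel^Y$ for every $i$.

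For the upper bound $\vrel^Y \le \vrel^X_{max}$ I would argue by contradiction. Assume $\vrel^Y > \vrel^X_{max}$, so $\vrel_i(Y_i) > \vrel_i(X_i)$ for all $i$. I then claim $x_k < y_k$ for $k=1,\dots,n-1$, proved by induction: for $k=1$ we have $x_0=y_0=0$, and if $x_1 \ge y_1$ then $[0,y_1]\subseteq[0,x_1]$ forces $\vrel_1([0,y_1]) \le \vrel_1([0,x_1])$ by monotonicity of the measure, contradicting $\vrel_1(Y_1) > \vrel_1(X_1)$; for the step, given $x_{k-1}\le y_{k-1}$, if $x_k \ge y_k$ then $[y_{k-1},y_k]\subseteq[x_{k-1},x_k]$ gives $\vrel_k(Y_k)\le\vrel_k(X_k)$, again a contradiction. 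Applying this with $k=n-1$ and then looking at the rightmost interval, $Y_n = [y_{n-1},c]\subseteq[x_{n-1},c]=X_n$, so $\vrel_n(Y_n)\le\vrel_n(X_n)$ — contradicting the case-$n$ inequality. Hence $\vrel^Y\le\vrel^X_{max}$.

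The lower bound $\vrel^X_{min}\le\vrel^Y$ is the mirror image: assuming $\vrel^Y<\vrel^X_{min}$ gives $\vrel_i(Y_i)<\vrel_i(X_i)$ for all $i$, the same induction yields $y_k < x_k$ for all $k<n$, and the rightmost interval gives $X_n\subseteq Y_n$, hence $\vrel_n(X_n)\le\vrel_n(Y_n)$, a contradiction.

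The one place to be careful is the induction. It relies only on weak monotonicity of the value measures with respect to set inclusion — which is immediate from non-negativity and additivity — so it is crucial (and suffices) that no strict monotonicity is assumed: all the strictness is supplied by the assumed strict inequality between $\vrel^Y$ and $\vrel^X_{max}$ (respectively $\vrel^X_{min}$). One must also note that the induction launches with only the weak inequality $x_0 \le y_0$ (in fact $x_0=y_0=0$), which is exactly what the base case needs; thereafter the inequalities between cut points are strict. I expect this bookkeeping — verifying that the nesting of intervals is valid at each step — to be the main, if modest, obstacle.
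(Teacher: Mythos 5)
Your proof is correct and follows essentially the same route as the paper's: both encode the two $\pi$-partitions by their cut points and derive a contradiction from the weak monotonicity of the value measures under interval inclusion. The only difference is organizational --- the paper anchors the two extreme cuts (agent $1$'s moves one way, agent $n$'s the other) and extracts an intermediate index $k$ with $Y_k\supset X_k$ via a crossing argument, whereas you propagate the shift of the cut points left-to-right by induction and land the contradiction on the rightmost piece, which makes the existence of the decisive index explicit rather than read off a picture.
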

\begin{proof}
The proof that $\vrel^X_{min} \leq \vrel^Y$
is in Lemma 7 of \cite{Cechlarova2013Existence};
the proof that $\vrel^Y \leq \vrel^X_{max}$ (which is analogous) is in Lemma 1 of \cite{Cechlarova2012Computability}.
We provide an alternative, graphic proof that that $\vrel^X_{min} \leq \vrel^Y$ (see Figure \ref{fig:equitable-min-max-value}).
\begin{figure}
  \centering
  \includegraphics[width=11cm]{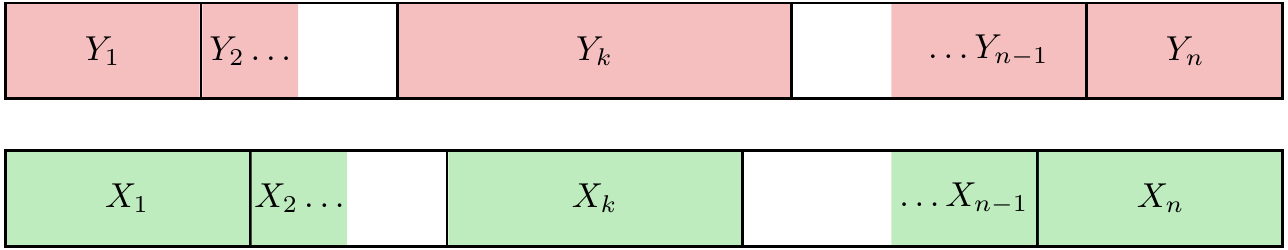}\\
  \caption{Proof of Lemma \ref{lemma:equitable-min-max-value}. \label{fig:equitable-min-max-value}}
\end{figure}

Assume w.l.o.g.\ that $\pi$ is the ordering $1,\dots,n$. Assume by contradiction that $\vrel^Y < \vrel^X_{min}$. In particular, this means that Agent 1 receives a smaller value in partition $Y$ than in partition $X$, that is, $\vrel_1(Y_1)<\vrel_1(X_1)$. Hence, the cut-point between pieces $Y_1$ and $Y_2$ is to the left of the cut-point between pieces $X_1$ and $X_2$.

The same is true for the $n$-th agent: $\vrel_n(Y_n)<\vrel_n(X_n)$. Hence the cut-point between pieces $Y_{n-1}$ and  $Y_n$ is to the \emph{right} of the cut-point between pieces $X_{n-1}$ and $X_n$. Because the leftmost cut-point moved to the left and the rightmost cut-point moved to the right, there must be a pair of adjacent cut-points such that the left one moved to the left and the right one moved to the right (see Figure \ref{fig:equitable-min-max-value}). Hence, there must be an index $k$, such that:
\begin{itemize}
\item The left boundary of piece $Y_k$ is to the left of the left boundary of $X_k$, and
\item The right boundary of  $Y_k$ is to the right of the right boundary of $X_k$.
\end{itemize}
This means that $Y_k \supset X_k$ which in turn implies that $\vrel_k(Y_k)\geq \vrel_k(X_k)$. This contradicts our assumption that $\vrel^Y < \vrel^X_{min}$.
\end{proof}

For every agent-ordering $\pi$, there may be many different equitable $\pi$-partitions. However, all these divisions have the same equitable value:

\begin{lemma}
\label{lemma:equitable-unique-value}
For every agent-ordering $\pi$, there is a unique value $\vrel^{\pi}$ which is the relative-equitable value in all relative-equitable $\pi$-partitions.
\end{lemma}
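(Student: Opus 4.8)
The plan is to obtain uniqueness as an immediate corollary of Lemma \ref{lemma:equitable-min-max-value}, with Lemma \ref{lemma:equitable-single-permutation} supplying existence so that the value $\vrel^{\pi}$ is actually attained. First I would record that, by Lemma \ref{lemma:equitable-single-permutation}, at least one relative-equitable $\pi$-partition exists; hence there is something to name $\vrel^{\pi}$, and it only remains to show that any two relative-equitable $\pi$-partitions have the same relative-equitable value.

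So suppose $Y$ and $Z$ are both relative-equitable $\pi$-partitions, with relative-equitable values $\vrel^Y$ and $\vrel^Z$ respectively. The key move is to apply Lemma \ref{lemma:equitable-min-max-value} in the ``asymmetric'' way: take the arbitrary $\pi$-partition in that lemma to be $X := Z$, and the relative-equitable $\pi$-partition to be $Y$. This yields $\vrel^Z_{min} \le \vrel^Y \le \vrel^Z_{max}$. But $Z$ is itself relative-equitable, so by the observation preceding Lemma \ref{lemma:equitable-min-max-value} we have $\vrel^Z_{min} = \vrel^Z_{max} = \vrel^Z$, and therefore $\vrel^Y = \vrel^Z$. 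Since $Y$ and $Z$ were arbitrary, every relative-equitable $\pi$-partition has the same relative-equitable value, which we denote $\vrel^{\pi}$.

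I do not expect any genuine obstacle: all of the real work (the topological argument behind existence, and the ``overlapping interval'' argument behind the sandwich inequality) has already been done in Lemmata \ref{lemma:equitable-single-permutation} and \ref{lemma:equitable-min-max-value}. The one point requiring care is bookkeeping: Lemma \ref{lemma:equitable-min-max-value} is stated for a pair consisting of one arbitrary $\pi$-partition and one relative-equitable $\pi$-partition, so one must be deliberate about which of the two equitable partitions plays which role when invoking it; either choice works and gives the same conclusion, but the argument should make the assignment explicit to avoid a circularity worry.
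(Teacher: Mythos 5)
Your proof is correct and follows essentially the same route as the paper's: both apply Lemma \ref{lemma:equitable-min-max-value} with one equitable $\pi$-partition playing the role of the arbitrary partition (so that $\vrel^X_{min}=\vrel^X_{max}$ collapses the sandwich inequality to an equality). Your explicit appeal to Lemma \ref{lemma:equitable-single-permutation} for existence is a harmless extra bit of bookkeeping that the paper leaves implicit.
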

\begin{proof}
This simple corollary is also proved in Corollary 2 of \citet{Cechlarova2012Computability}.

Assume that there are two relative-equitable $\pi$-partitions: $X$ with equitable value $\vrel^X$ and $Y$ with equitable value $\vrel^Y$. By Lemma \ref{lemma:equitable-min-max-value}, $\vrel^X \leq \vrel^Y \leq \vrel^X$. Hence $\vrel^X=\vrel^Y$.
\end{proof}

A straightforward corollary of the above lemmata is that the orderings can be sorted by their equitable value. Since for $n$ agents there are finitely many different orderings the following holds.
\begin{corollary}
There exist max-relative-equitable divisions with connected pieces.
\end{corollary}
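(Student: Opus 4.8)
The plan is to combine Lemma \ref{lemma:equitable-single-permutation} and Lemma \ref{lemma:equitable-unique-value} with the trivial observation that the number of agent-orderings is finite. By Lemma \ref{lemma:equitable-unique-value}, to each ordering $\pi$ there corresponds a single well-defined number $\vrel^{\pi}$ --- the relative-equitable value shared by all relative-equitable $\pi$-partitions. Since there are only $n!$ orderings, the set $\{\vrel^{\pi} : \pi \text{ an agent-ordering}\}$ is finite and therefore attains a maximum, say at $\pi^{*}$. By Lemma \ref{lemma:equitable-single-permutation} there is a relative-equitable $\pi^{*}$-partition $X^{*}$; this is a legitimate connected division, and its relative-equitable value equals $\vrel^{\pi^{*}}$. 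So the whole content of the corollary reduces to showing that $\vrel^{\pi^{*}}$ is weakly larger than the relative-equitable value of \emph{every} relative-equitable division, which is exactly the definition of max-relative-equitable.

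First I would note that every connected partition of the whole cake into $n$ intervals is a $\pi$-partition for some ordering $\pi$: just read off the pieces from left to right (an empty piece can be placed anywhere, which makes the ordering non-unique but never undefined). Hence, if $Y$ is a relative-equitable division whose pieces happen to tile all of $C$, Lemma \ref{lemma:equitable-unique-value} forces the common value of $Y$ to be $\vrel^{\pi_Y}$ for the induced ordering $\pi_Y$, and $\vrel^{\pi_Y}\le \vrel^{\pi^{*}}$ by the choice of $\pi^{*}$.

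The only point that requires a little care is that the model permits free disposal, so a relative-equitable division $Y$ need not cover all of $C$. To handle this I would ``inflate'' $Y$: enlarge each interval $Y_i$ to the left and to the right, absorbing the undivided slivers into adjacent pieces, to obtain a genuine $\pi_Y$-partition $Y'$ with $Y'_i \supseteq Y_i$ for all $i$. Then $\vrel_i(Y'_i)\ge \vrel_i(Y_i)$, so the common value of $Y$ is at most $\vrel^{Y'}_{min}$, while Lemma \ref{lemma:equitable-min-max-value} applied to $X=Y'$ (together with the relative-equitable $\pi_Y$-partition guaranteed by Lemma \ref{lemma:equitable-single-permutation}) gives $\vrel^{Y'}_{min}\le \vrel^{\pi_Y}$. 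Chaining these, the value of $Y$ is at most $\vrel^{\pi_Y}\le \vrel^{\pi^{*}}$. Therefore $X^{*}$ is max-relative-equitable.

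I expect the only mildly delicate step to be this free-disposal reduction and the bookkeeping of the inflation argument (plus the harmless ambiguity created by empty pieces); the heart of the proof --- finitely many orderings, each carrying a single equitable value, pick the largest --- is immediate from the lemmas already established, which is why the statement is phrased as a corollary.
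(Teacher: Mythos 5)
Your proposal is correct and follows essentially the same route as the paper, which justifies the corollary in one line by noting that there are finitely many orderings, each carrying a unique relative-equitable value by Lemma \ref{lemma:equitable-unique-value}, so one can pick the ordering attaining the maximum and invoke Lemma \ref{lemma:equitable-single-permutation} for existence. Your additional inflation argument for divisions that do not cover the whole cake is a valid and welcome filling-in of a detail the paper leaves implicit.
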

Define the \emph{relative-equitable rule} as the rule that returns all connected max-relative-equitable divisions of the cake. By Lemma \ref{lemma:equitable-unique-value}, this rule is essentially-single-valued.
\begin{lemma}
\label{lemma:equitable-is-wpo}
The relative-equitable division rule is weakly-Pareto-optimal.
\end{lemma}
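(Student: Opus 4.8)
The plan is to argue by contradiction, leaning entirely on the three preceding lemmata. By Lemmas~\ref{lemma:equitable-single-permutation}--\ref{lemma:equitable-unique-value} and the preceding corollary, every agent-ordering $\pi$ has a well-defined relative-equitable value $\vrel^{\pi}$, and each division $X$ returned by the relative-equitable rule is relative-equitable with value $\vrel^{*} := \max_{\pi} \vrel^{\pi}$. So I would fix such an $X$ and assume, for contradiction, that some division $Y$ is strictly better for everyone: $\vrel_i(Y_i) > \vrel^{*}$ for all $i \in N$.

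First I would put $Y$ into the form of a $\pi$-partition of the whole cake. Since an agent's utility equals the value of the most valuable connected component of its piece, I can replace each $Y_i$ by an appropriate subinterval without decreasing any utility, so I may assume each $Y_i$ is an interval. Because $\vrel_i(Y_i) > \vrel^{*} \geq 0$, every $Y_i$ is nonempty, so listing the pieces from left to right yields an agent-ordering $\pi$. The pieces may leave gaps (free disposal is assumed), but enlarging each interval to absorb the empty regions adjacent to it produces a genuine $\pi$-partition $Y'$ with $Y'_i \supseteq Y_i$, hence $\vrel_i(Y'_i) \geq \vrel_i(Y_i) > \vrel^{*}$, and in particular $\vrel^{Y'}_{min} > \vrel^{*}$.

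Then I would invoke the equitability lemmata for this ordering $\pi$. By Lemma~\ref{lemma:equitable-single-permutation} there is a relative-equitable $\pi$-partition, and by Lemma~\ref{lemma:equitable-unique-value} its relative-equitable value equals $\vrel^{\pi}$. Applying Lemma~\ref{lemma:equitable-min-max-value} to the $\pi$-partition $Y'$ gives $\vrel^{\pi} \geq \vrel^{Y'}_{min} > \vrel^{*}$, which contradicts the choice of $\vrel^{*}$ as the largest relative-equitable value over all orderings. Hence no such $Y$ exists, so every division returned by the rule is weakly-Pareto-optimal.

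The hard part will be the normalization step rather than the concluding argument. One has to use the \emph{strict} inequality $\vrel_i(Y_i) > \vrel^{*}$ to guarantee that all pieces of $Y$ are nonempty --- so that the left-to-right ordering $\pi$ is genuinely well defined, even in the degenerate case $\vrel^{*} = 0$ --- and to check that both collapsing a piece to its best connected component and filling in the gaps are utility-nondecreasing operations. Once $Y$ has been rewritten as a $\pi$-partition with $\vrel^{Y'}_{min} > \vrel^{*}$, the contradiction follows immediately from Lemmas~\ref{lemma:equitable-single-permutation}--\ref{lemma:equitable-unique-value} and the definition of the relative-equitable rule.
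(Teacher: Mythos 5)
Your proof is correct and follows essentially the same route as the paper's: assume a division strictly better for all agents, read off its left-to-right ordering $\pi$, and apply Lemma~\ref{lemma:equitable-min-max-value} to conclude that the equitable value for $\pi$ exceeds the maximum, a contradiction. The extra normalization step (collapsing to connected components and filling gaps) is a careful spelling-out of what the paper handles implicitly via its earlier convention that connected utilities let one assume interval pieces.
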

\begin{proof}
Let $Y$ be a max-relative-equitable division with equitable value $\vrel^Y$. Suppose by contradiction that there is a division $X$ in which the utility of all agents is strictly more than $\vrel^Y$. Let $\pi$ be the agent ordering in $X$. By Lemma \ref{lemma:equitable-min-max-value}, the relative-equitable-value of the relative-equitable division in ordering $\pi$ is at least $\vrel^X_{min} > \vrel^Y$. But this contradicts the maximality of $Y$.
\end{proof}

\begin{lemma}
\label{lemma:equitable-is-pm}
The relative-equitable division rule is population-monotonic.
\end{lemma}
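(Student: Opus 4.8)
The plan is to funnel both directions of population-monotonicity into a single monotonicity statement about the maximal equitable value. Since the relative-equitable rule is essentially single-valued (Lemma \ref{lemma:equitable-unique-value}), every division it outputs for a population $N$ gives agent $i$ the relative value $\vrel^{*}_{N}:=\max_{\pi}\vrel^{\pi}$, hence the absolute value $\vrel^{*}_{N}\cdot\vabs_i(C)$. A population change does not alter the cake, so each $\vabs_i(C)$ is unchanged, and the absolute value of every surviving agent moves in lockstep with $\vrel^{*}$. Consequently, once I establish that $\vrel^{*}_{N'}\geq\vrel^{*}_{N}$ whenever $N'\subseteq N$, I get downwards-PM by taking any $X\in R(\Gamma)$ and any $Y\in R(\Gamma')$ and observing $\vabs_i(Y_i)=\vrel^{*}_{N'}\vabs_i(C)\geq\vrel^{*}_{N}\vabs_i(C)=\vabs_i(X_i)$ for all $i\in N'$; upwards-PM is the same inequality read from the larger population to the smaller one.

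To prove $\vrel^{*}_{N'}\geq\vrel^{*}_{N}$ I would first reduce, by removing agents one at a time, to the case $N'=N\setminus\{\ell\}$. Pick an ordering $\pi$ of $N$ attaining $\vrel^{\pi}=\vrel^{*}_{N}$ together with a relative-equitable $\pi$-partition $Y=(Y_i)_{i\in N}$, so every agent (including $\ell$) receives relative value exactly $\vrel^{*}_{N}$ and the intervals $Y_i$ sit consecutively along $C$ in the order $\pi$. Delete $\ell$ from $\pi$ to get an ordering $\pi'$ of $N'$, and build a $\pi'$-partition $X'$ of the whole cake by merging the interval $Y_\ell$ into the piece of the agent immediately preceding $\ell$ in $\pi$ (or into the first surviving agent's piece if $\ell$ was leftmost), leaving all other pieces untouched. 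Each surviving agent then holds an interval containing its old piece, so $\vrel_i(X'_i)\geq\vrel_i(Y_i)=\vrel^{*}_{N}$ and hence $\vrel^{X'}_{min}\geq\vrel^{*}_{N}$. Applying Lemma \ref{lemma:equitable-min-max-value} to the $\pi'$-partition $X'$ and a relative-equitable $\pi'$-partition gives $\vrel^{\pi'}\geq\vrel^{X'}_{min}\geq\vrel^{*}_{N}$, and since $\vrel^{*}_{N'}\geq\vrel^{\pi'}$ by definition, the claim follows; chaining the single-deletion inequality over a removal sequence handles general $N'\subseteq N$.

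The only delicate point — and the main obstacle, modest as it is — is verifying that the merging operation really yields a legitimate $\pi'$-partition: the merged piece must still be one interval, the pieces must still tile $C$ with no gaps or overlaps, and they must appear in the order $\pi'$. This holds because $Y_\ell$ is adjacent to the piece it is absorbed into and all remaining pieces are kept as they are, so the geometric bookkeeping is routine; but it deserves to be stated explicitly, since Lemma \ref{lemma:equitable-min-max-value} is only available for genuine $\pi'$-partitions. Everything else in the argument is immediate from essential single-valuedness and the lemmas already proved.
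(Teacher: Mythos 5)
Your proof is correct and follows essentially the same route as the paper's: reduce to a monotonicity claim about the maximal equitable value via essential single-valuedness, absorb the departing agent's interval into an adjacent agent's piece to get a $\pi'$-partition whose minimum value is at least the old equitable value, and invoke Lemma \ref{lemma:equitable-min-max-value}. The only differences are presentational (you chain single deletions and flag the bookkeeping that the merged pieces still form a legitimate $\pi'$-partition, which the paper leaves implicit).
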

\begin{proof}
Since the rule is essentially-single-valued, it is sufficient to prove downwards-PM.

Let $X$ be a max-relative-equitable for $n$ agents with equitable value $\vrel^X$. Suppose that an agent $i\in N$ abandons his share. Give agent $i$'s piece to an agent that holds an adjacent piece, e.g.\ to agent $i+1$. Call the resulting division $Y$. We obtained a connected division for $n-1$ agents, in which the smallest value enjoyed by an agent is at least $\vrel^X$ (indeed, the value of all agents except $i+1$ is exactly $\vrel^X$, and the value of agent $i+1$ is at least as large). By Lemma \ref{lemma:equitable-min-max-value}, the maximum equitable value in the new situation is at least $\vrel^X$. Hence, in the max-relative-equitable for $n-1$ agents, the value of all agents is at least as large as in the previous division.
\end{proof}

\begin{lemma}
\label{lem:relative-equitable}
The relative-equitable division rule is proportional.
\end{lemma}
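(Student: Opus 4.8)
The plan is to show that the max-relative-equitable value is at least $1/n$; since in any max-relative-equitable division every agent receives exactly this value, proportionality follows immediately.

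First I would invoke the existence of a connected proportional division, which is classical. Concretely, one can use the last-diminisher (Banach-Knaster) procedure of Subsection \ref{sub:exact-prop}: at each of the first $n-1$ steps the agent making the leftmost $1/n$-mark takes the piece to the left of it and leaves, and the final agent receives the \emph{entire} remaining right part of the cake (rather than discarding a tail). Call the resulting division $X$ and let $\pi$ be the left-to-right agent-ordering it induces. By construction $X$ is a genuine $\pi$-partition of the whole cake into $n$ intervals. Each agent who stopped at a mark has relative value exactly $1/n$; and since every removed piece was cut at (or to the left of) the cutting agent's own $1/n$-mark, each of the $n-1$ removed pieces is worth at most $1/n$ to the last agent, so his remainder is worth at least $1-(n-1)/n=1/n$ to him. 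Hence $\vrel^X_{min}\geq 1/n$.

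Next I would apply Lemma \ref{lemma:equitable-min-max-value} with this $X$ and the relative-equitable $\pi$-partition $Y$ guaranteed by Lemma \ref{lemma:equitable-single-permutation}: it gives $\vrel^{\pi}=\vrel^Y\geq \vrel^X_{min}\geq 1/n$. Since the relative-equitable rule returns only divisions whose equitable value equals the maximum of $\vrel^{\pi'}$ over all orderings $\pi'$, and $\pi$ is one admissible ordering, the max-relative-equitable value is at least $\vrel^{\pi}\geq 1/n$. Therefore every agent in a max-relative-equitable division receives relative value at least $1/n$, so the rule is proportional.

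There is no deep obstacle here; the one point that needs care is that the exhibited proportional division must be a partition of the \emph{entire} cake (so that Lemma \ref{lemma:equitable-min-max-value} applies verbatim), which is why the whole right-hand remainder is awarded to the last agent — this can only raise his value above $1/n$. Beyond that bookkeeping, the proof is just the combination of ``a connected proportional division exists'' with Lemma \ref{lemma:equitable-min-max-value}.
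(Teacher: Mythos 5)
Your proof is correct and follows essentially the same route as the paper: exhibit a connected proportional partition $X$, note $\vrel^X_{min}\geq 1/n$, and apply Lemma \ref{lemma:equitable-min-max-value} to conclude the max-relative-equitable value is at least $1/n$. The only difference is that the paper simply cites \cite{Steinhaus1948} for the existence of such a division, whereas you construct it explicitly and carefully arrange for it to partition the whole cake --- a reasonable extra precaution, but not a different argument.
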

\begin{proof}
This is also proved in Corollary 1 of \citet{Cechlarova2012Computability}.

Let $X$ be any connected proportional division of the cake (by \cite{Steinhaus1948} such a division always exists). Because $X$ is proportional, $\vrel^X_{min}\geq 1/n$. Hence, by Lemma \ref{lemma:equitable-min-max-value}, the value of a relative-equitable division in the same ordering as $X$ is at least $1/n$. Hence, the maximum relative-equitable value is at least $1/n$.
\end{proof}

Unfortunately, the relative-equitable rule is not resource-monotonic.

\begin{example} \label{exm:relative-equitable-not-rm}
Consider the following cake, where $M$ is a large constant, $M\gg 2$:
\begin{center}
\begin{tabular}{lcccc}
    & &   &  &  \\
\hline \multicolumn{1}{|c}{$\vabs_A$} & \multicolumn{1}{|c}{\cellcolor{myGreen!25}$M$} & \multicolumn{1}{|c}{\cellcolor{myGreen!25}$M$} & \multicolumn{1}{|c}1 & \multicolumn{1}{|c|}1    \\
\hline \multicolumn{1}{|c}{$\vabs_B$}  & \multicolumn{1}{|c}1 & \multicolumn{1}{|c}1 & \multicolumn{1}{|c}{\cellcolor{myBlue!25}$M$} & \multicolumn{1}{|c|}{\cellcolor{myBlue!25}$M$}  \\
\hline
\end{tabular}
\quad\quad
\begin{tabular}{lcccccc}
    & &   &  & & $\blacktriangledown$ & $\blacktriangledown$ \\
\hline \multicolumn{1}{|c}{$\vabs_A$} & \multicolumn{1}{|c}{\cellcolor{myGreen!25}$M$} &
\multicolumn{1}{|c}{\cellcolor{myGreen!25}$M$}&
\multicolumn{1}{|c}{\cellcolor{myGreen!25}1} &
\multicolumn{1}{|c}1 &
\multicolumn{1}{|c}{$M$} &
\multicolumn{1}{|c|}{$M$}
\\
\hline \multicolumn{1}{|c}{$\vabs_B$}  & \multicolumn{1}{|c}1 &
\multicolumn{1}{|c}1 &
\multicolumn{1}{|c}{$M$} & \multicolumn{1}{|c}{\cellcolor{myBlue!25}$M$} &  \multicolumn{1}{|c}{\cellcolor{myBlue!25}1} & \multicolumn{1}{|c|}{\cellcolor{myBlue!25}1} \\
\hline
\end{tabular}

\end{center}

In the smaller cake, the unique max-relative-equitable division gives the two leftmost slices to Alice and the two rightmost slices to Bob. The relative-equitable value is $M/(M+2)$. However, in the larger cake the unique max-relative-equitable division is attained by cutting exactly in the middle, decreasing the relative-equitable value to $1/2$. While Alice gains from the division and her (absolute) value increases by $1$, Bob loses since his absolute value drops from $2 M$ to $M+2$.
\end{example}

The following theorem summarizes the properties of the relative-equitable rule:
\begin{theorem}
\label{thm:relative-equitable}
The relative-equitable rule is weakly-Pareto-optimal and population-monotonic and proportional, but not resource-monotonic.
\end{theorem}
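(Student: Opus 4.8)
The plan is simply to assemble the four facts about the relative-equitable rule that have already been established, so the proof is essentially bookkeeping. Weak Pareto-optimality is exactly Lemma~\ref{lemma:equitable-is-wpo}: if some division gave every agent strictly more than the max-relative-equitable value $\vrel^Y$, then writing $\pi$ for its ordering, Lemma~\ref{lemma:equitable-min-max-value} would force the relative-equitable value of the equitable $\pi$-partition to be at least $\vrel^X_{min} > \vrel^Y$, contradicting maximality of $Y$. Proportionality is Lemma~\ref{lem:relative-equitable}: a connected proportional division exists by \citet{Steinhaus1948}, it has $\vrel^X_{min}\geq 1/n$, and Lemma~\ref{lemma:equitable-min-max-value} then forces the max-equitable value to be at least $1/n$. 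Population-monotonicity is Lemma~\ref{lemma:equitable-is-pm}: the rule is essentially single-valued by Lemma~\ref{lemma:equitable-unique-value}, so it suffices to verify downwards-PM, and merging a departing agent's interval into an adjacent one yields a connected $(n-1)$-partition whose minimum value is at least $\vrel^X$, whence by Lemma~\ref{lemma:equitable-min-max-value} the new max-equitable value is at least $\vrel^X$.

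For the negative clause, resource-monotonicity fails by Example~\ref{exm:relative-equitable-not-rm}: enlarging the cake forces the unique max-relative-equitable division to move the central cut to the midpoint, dropping Bob's absolute value from $2M$ to $M+2$. Because the rule is essentially single-valued, there is no other division in $R(\Gamma')$ available to keep Bob weakly better off, so downwards-RM---hence RM---is violated. One remark worth making explicit is that the positive properties are being asserted for the rule as the set-valued correspondence returning \emph{all} connected max-relative-equitable divisions, which is what the definitions of WPO, PROP and PM in Section~\ref{sec:model} require; essential single-valuedness (Lemma~\ref{lemma:equitable-unique-value}) is precisely what reduces the two-directional monotonicity check to the single direction handled in Lemma~\ref{lemma:equitable-is-pm}.

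There is no genuine obstacle here: all of the substantive work was carried out in the lemmata (via Lemma~\ref{lemma:equitable-min-max-value}, which is the real engine) and in Example~\ref{exm:relative-equitable-not-rm}. The only thing to be careful about is citing the correct combination of results and noting that essential single-valuedness is what lets the earlier one-directional arguments deliver the full statement.
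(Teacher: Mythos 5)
Your proposal is correct and matches the paper exactly: Theorem~\ref{thm:relative-equitable} is stated there as a summary of Lemmata~\ref{lemma:equitable-is-wpo}, \ref{lemma:equitable-is-pm}, \ref{lem:relative-equitable} and Example~\ref{exm:relative-equitable-not-rm}, which is precisely the assembly you give. Your added remark that essential single-valuedness (Lemma~\ref{lemma:equitable-unique-value}) is what reduces the monotonicity check to one direction is also the paper's reasoning.
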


\subsection{\textbf{Absolute-equitable rule: WPO+PM+RM}}  \label{sub:absolute-equitable}
As shown by Example \ref{exm:relative-equitable-not-rm}, the relative-equitable rule is not RM since the relative value of some agents is made smaller when the cake becomes larger. This may imply that, if we use \emph{absolute} instead of relative values, we can get resource-monotonicity.

Fortunately, almost all definitions, examples, procedures, lemmata and proofs from the previous subsection can easily be adapted to absolute values by just replacing ``relative'' with ``absolute''; the only exception is Lemma \ref{lem:relative-equitable}.

\begin{theorem}
\label{thm:absolute-equitable}
The absolute-equitable rule is
weakly-Pareto-optimal and population-monotonic and resource-monotonic, but not proportional.
\end{theorem}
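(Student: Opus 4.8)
The plan is to piggy-back on the observation already made in the text: every definition, the moving-knife procedure, and Lemmas~\ref{lemma:equitable-min-max-value}, \ref{lemma:equitable-unique-value}, \ref{lemma:equitable-is-wpo}, \ref{lemma:equitable-is-pm} carry over verbatim after replacing ``relative'' by ``absolute'', since none of their proofs used the normalization $\vabs_i(C)=1$. Hence the absolute analogues give directly that the absolute-equitable rule is essentially single-valued (from Lemma~\ref{lemma:equitable-unique-value}), weakly-Pareto-optimal (from Lemma~\ref{lemma:equitable-is-wpo}), and population-monotonic (from Lemma~\ref{lemma:equitable-is-pm}). The only two genuinely new claims are resource-monotonicity — which \emph{fails} for the relative rule precisely because of the normalization (Example~\ref{exm:relative-equitable-not-rm}) — and the failure of proportionality. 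I would state the WPO/PM/ESV part in one sentence and spend the work on these two.

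For resource-monotonicity: since the rule is essentially single-valued, by the remark in Section~\ref{sec:model} it suffices to prove upwards-RM. Let $\Gamma'=(N,C',(\vabs_i)_{i\in N})$ be a cake-enlargement of $\Gamma=(N,C,(\vabs_i)_{i\in N})$ with $C=[0,c]\subset[0,c']=C'$, and let $X$ be a max-absolute-equitable division of $\Gamma$ with equitable value $\vabs^X$, so $\vabs_i(X_i)=\vabs^X$ for all $i$. Form a division $X'$ of $C'$ by replacing the rightmost piece $X_j=[a,c]$ of $X$ by $[a,c']$, leaving all other pieces unchanged. Then $X'$ is still a connected partition with the same agent-ordering $\pi$, every non-rightmost agent still has value exactly $\vabs^X$, and agent $j$ weakly gains, so the minimum agent value in $X'$ equals $\vabs^X$. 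By the absolute analogue of Lemma~\ref{lemma:equitable-min-max-value}, the absolute-equitable $\pi$-partition of $C'$ has equitable value at least this minimum, i.e.\ at least $\vabs^X$; hence the maximum absolute-equitable value over all orderings of $\Gamma'$ is at least $\vabs^X$. Letting $Y$ be a max-absolute-equitable division of $\Gamma'$ with equitable value $\vabs^Y\ge\vabs^X$, equitability gives $\vabs_i(Y_i)=\vabs^Y\ge\vabs^X=\vabs_i(X_i)$ for every $i$, which is upwards-RM. (The same argument handles enlargements on the left or in the middle by appending the new segment to an adjacent piece.)

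For the failure of proportionality, I would exhibit a two-agent cake on which the agents disagree wildly about the value of the whole: take $C=[0,2]$ with $\vabs_A$ of density $1$ on $[0,1]$ and $0$ on $[1,2]$, and $\vabs_B$ of density $0$ on $[0,1]$ and $M$ on $[1,2]$ for some $M\gg 2$, so $\vabs_A(C)=1$ and $\vabs_B(C)=M$. Since any equitable division gives both agents a common absolute value $v$ and Alice can never obtain more than $\vabs_A(C)=1$, the max-absolute-equitable value is exactly $1$ — attained by giving Alice $[0,2-1/M]$ and Bob $[2-1/M,2]$. Then Bob's relative value is $1/M<1/2=1/n$, so the division is not proportional. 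More conceptually, the equitable value is always at most $\min_i\vabs_i(C)$, so an agent who values the cake much more than another is always squeezed below his proportional share; this is also exactly why the relative version (which rescales each $\vabs_i$ by $\vabs_i(C)$) was proportional while the absolute one is not.

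The step needing the most care is the RM argument, and its real content is why switching to absolute values repairs what Example~\ref{exm:relative-equitable-not-rm} broke: enlarging the cake can only (weakly) enlarge the feasible set of equitable $\pi$-partitions and hence can only weakly raise the maximal equitable value, whereas under normalization the denominators $\vabs_i(C)$ grow as well and can force the maximal relative-equitable value down. Beyond that the proof is bookkeeping; the only points to double-check are that appending the new segment to the rightmost piece keeps the partition connected, does not lower the minimum agent value, and preserves the agent-ordering, all of which are immediate.
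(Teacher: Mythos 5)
Your proposal is correct and follows essentially the same route as the paper: WPO and PM are obtained by the relative-to-absolute substitution in Lemmas~\ref{lemma:equitable-is-wpo} and~\ref{lemma:equitable-is-pm}, RM by treating the enlargement as a piece appended to the adjacent (rightmost) agent and invoking the absolute analogue of Lemma~\ref{lemma:equitable-min-max-value} --- which is exactly the paper's one-line remark that the enlargement can be viewed as a piece acquired from a departing agent --- and non-proportionality by a two-agent cake with $\vabs_A(C)=1$ and $\vabs_B(C)=M$, forcing the common equitable value to be at most $\min_i\vabs_i(C)$ and hence Bob's relative share below $1/2$. Your write-up merely makes explicit the details the paper leaves terse.
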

\begin{proof}
WPO holds by Lemma \ref{lemma:equitable-is-wpo} and PM by Lemma \ref{lemma:equitable-is-pm}, replacing ``relative'' by ``absolute''.

The proof of RM is essentially the same as the proof of Lemma \ref{lemma:equitable-is-pm}: the cake enlargement can be treated as a piece that was acquired from an agent who left the scene. \footnote{Note that this argument is not true for the relative-equitable-connected rule. When the cake grows, while the absolute value of all agents weakly increases, the relative value of some agents may decrease. Hence, the relative-equitable value in the enlarged cake might be smaller than in the original cake, and this may make some agents worse-off. See Example \ref{exm:relative-equitable-not-rm}.}

To see that the rule is not PROP, suppose that Alice values the entire cake as $1$ and Bob values the entire cake as $M\gg 2$. Then, any absolute-equitable division must give Bob at most $1/M \ll 1/2$ of his value.
\end{proof}

\subsection{\textbf{Rightmost-mark rule: WPO+PROP+RM}} \label{sub:right-mark}
In this section we present a resource-monotonic procedure that produces an envy-free (hence proportional) division of the whole cake for 2 agents, giving each agent a connected piece.

\begin{figure}[h!]
	\begin{center}
		\includegraphics[width=0.8\columnwidth]{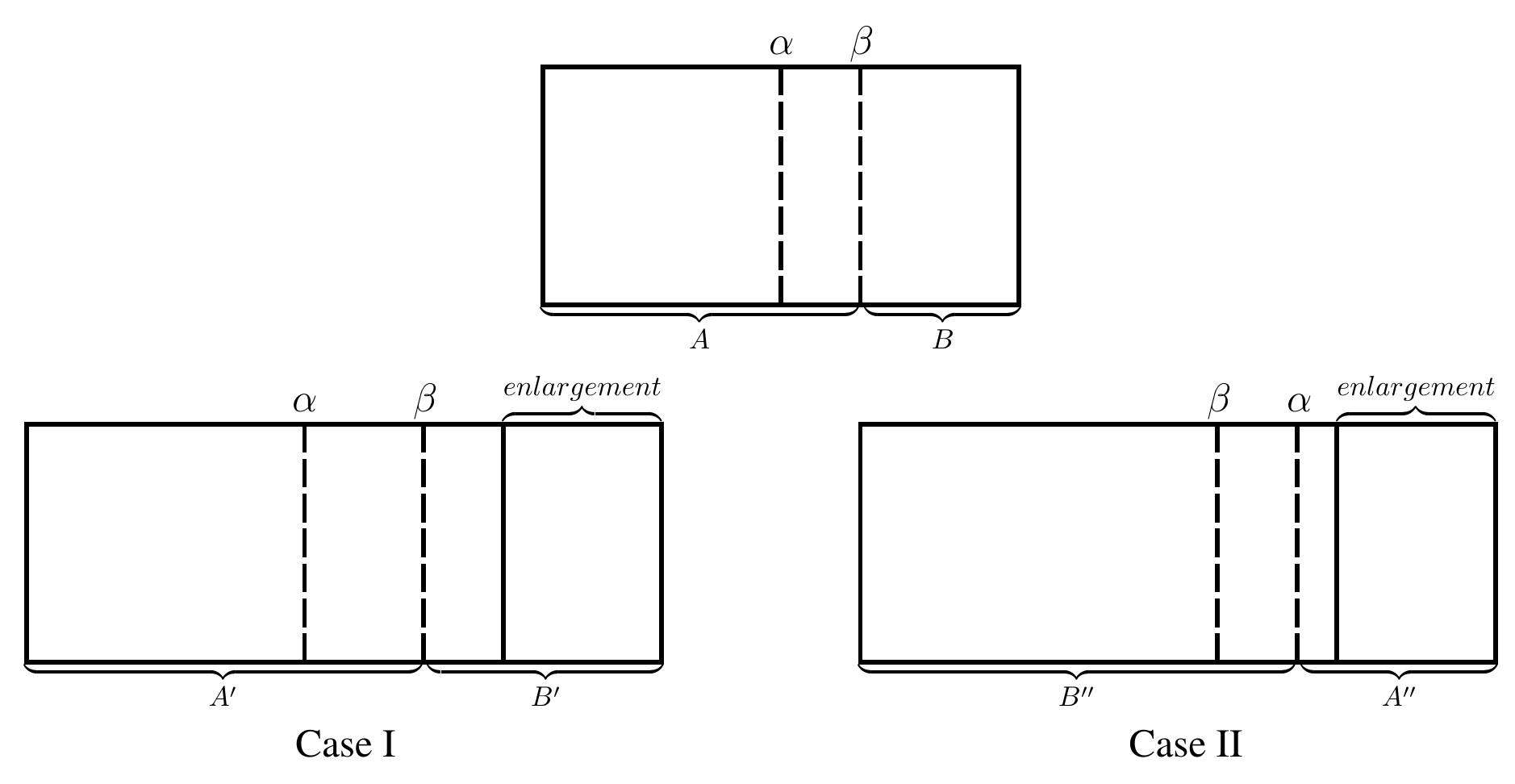}
		\caption{Illustration of the rightmost-mark division rule. Alice's cut mark is denoted by $\alpha$, while Bob's cut mark by $\beta$.}\label{fig:rightmost_mark}
	\end{center}
\end{figure}

The procedure is called the \emph{rightmost-mark rule} and consists of the following steps.

\begin{itemize}
	\item{Ask both agents to make a mark which cuts the cake in half according to their own valuation. If more than one point satisfies this criterion, i.e.\ the middle of the cake is worthless to one of the agents, take the rightmost such point.}
	\item{Cut the cake at the rightmost mark and give the slice on the right to the agent who made the mark.}
	\item{The remaining part is given to the other agent.}
\end{itemize}

\begin{theorem}
For two agents with connected utilities, the rightmost-mark procedure is envy-free, proportional, weakly Pareto-optimal and resource-monotonic.
\end{theorem}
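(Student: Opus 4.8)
The plan is to treat the four properties separately, disposing of envy-freeness, proportionality and weak Pareto-optimality quickly and devoting the bulk of the work to resource-monotonicity. Write $C=[0,c]$, let $\alpha$ be Alice's rightmost half-mark and $\beta$ Bob's rightmost half-mark, so $\vabs_A([0,\alpha])=\vabs_A([\alpha,c])=\vabs_A(C)/2$ and $\vabs_B([0,\beta])=\vabs_B([\beta,c])=\vabs_B(C)/2$; these exist and are unique since each $x\mapsto\vabs_i([0,x])$ is continuous and we take the largest preimage of $\vabs_i(C)/2$. By swapping the names of the agents we may assume $\alpha\le\beta$, so the rule cuts at $\beta$ and outputs $X_A=[0,\beta]$, $X_B=[\beta,c]$. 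Then $\vabs_B(X_B)=\vabs_B(C)/2$ exactly, while $\vabs_A(X_A)\ge\vabs_A([0,\alpha])=\vabs_A(C)/2$ because $[0,\alpha]\subseteq[0,\beta]$. Since the rightmost half-marks are unique, the cut point $\max(\alpha,\beta)$ and hence the allocation are determined, except when $\alpha=\beta$, in which case both pieces are worth exactly $\vabs_i(C)/2$ to each agent $i$; so the rule is essentially single-valued. For envy-freeness: Bob is indifferent between the two pieces, so he does not envy Alice, and $\vabs_A(X_A)\ge\vabs_A(C)/2\ge\vabs_A(X_B)$, so Alice does not envy Bob. Since the rule partitions the whole cake, envy-freeness forces each agent's value to be at least half his value of $C$, which is proportionality.

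For weak Pareto-optimality, suppose toward a contradiction that a division $Y$ satisfies $\vabs_A(Y_A)>\vabs_A(X_A)$ and $\vabs_B(Y_B)>\vabs_B(X_B)$. Since the agents have connected utilities we may take $Y_A,Y_B$ to be disjoint intervals, so there is a point $t$ with either $Y_A\subseteq[0,t]$, $Y_B\subseteq[t,c]$, or $Y_B\subseteq[0,t]$, $Y_A\subseteq[t,c]$. In the first case, if $t\le\beta$ then $\vabs_A(Y_A)\le\vabs_A([0,t])\le\vabs_A([0,\beta])=\vabs_A(X_A)$, while if $t\ge\beta$ then $\vabs_B(Y_B)\le\vabs_B([t,c])\le\vabs_B([\beta,c])=\vabs_B(X_B)$; either way a contradiction. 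In the second case, if $t\le\beta$ then $\vabs_B(Y_B)\le\vabs_B([0,t])\le\vabs_B([0,\beta])=\vabs_B(C)/2=\vabs_B(X_B)$, while if $t\ge\beta$ then (as $\beta\ge\alpha$) $\vabs_A(Y_A)\le\vabs_A([t,c])\le\vabs_A([\alpha,c])=\vabs_A(C)/2\le\vabs_A(X_A)$; again a contradiction. Hence no such $Y$ exists.

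The core of the argument is resource-monotonicity. Because the rule is essentially single-valued, by the remark following the definition of resource-monotonicity it suffices to prove one direction, so I would show that enlarging the cake makes both agents weakly better off. As this subsection concerns enlargement on the right, the new problem has cake $C'=[0,c']$ with $c'>c$ and the same value measures; let $\alpha',\beta'$ be the rightmost half-marks of Alice and Bob in $C'$. Using the characterization $\alpha=\sup\{x\in[0,c]:\vabs_A([0,x])\le\vabs_A(C)/2\}$ (and likewise for $\beta,\alpha',\beta'$), the inequalities $\vabs_i(C')\ge\vabs_i(C)$ immediately give $\alpha'\ge\alpha$ and $\beta'\ge\beta$: both half-marks can only move right. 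Now distinguish two cases for the new allocation. If $\alpha'\le\beta'$, the rule cuts $C'$ at $\beta'$, so Alice receives $[0,\beta']\supseteq[0,\beta]=X_A$ and her value weakly increases, while Bob receives a piece worth exactly $\vabs_B(C')/2\ge\vabs_B(C)/2=\vabs_B(X_B)$, so his value weakly increases. If instead $\alpha'>\beta'$, the rule cuts $C'$ at $\alpha'$, so Alice receives a piece worth exactly $\vabs_A(C')/2=\vabs_A([0,\alpha'])\ge\vabs_A([0,\beta])=\vabs_A(X_A)$ (using $\alpha'>\beta'\ge\beta$), and Bob receives $[0,\alpha']$, worth $\vabs_B([0,\alpha'])\ge\vabs_B([0,\beta])=\vabs_B(C)/2=\vabs_B(X_B)$ (again by $\alpha'\ge\beta$). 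In every case both agents are weakly better off under the enlargement, which is upwards resource-monotonicity; essential single-valuedness then upgrades this to full resource-monotonicity.

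I expect the one delicate point to be the second case of the resource-monotonicity argument, where the agent who receives exactly half of his own value switches from Bob (in the small cake) to Alice (in the large cake); one must check that this switch does not hurt Alice, and the reason it does not is precisely that her new half-mark $\alpha'$ lies to the right of Bob's old mark $\beta$, so her new interval $[0,\alpha']$ still contains her old interval $[0,\beta]$. Two minor points also deserve care: the monotonicity $\alpha'\ge\alpha$, $\beta'\ge\beta$ of the rightmost marks should be argued via the supremum characterization above, since a naive appeal to monotonicity of $x\mapsto\vabs_i([0,x])$ is not by itself enough when the density vanishes on part of the added piece; and it should be noted that right-orientation of the enlargement is genuinely used here, because the rightmost-mark rule is not invariant under reversing the cake, so (unlike the earlier results) this one may fail for enlargements on the left or in the middle.
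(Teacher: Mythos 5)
Your proof is correct and follows essentially the same route as the paper's: envy-freeness and proportionality are immediate, weak Pareto-optimality is a case analysis on where an allegedly dominating division would cut, and resource-monotonicity is proved in the upwards direction only (justified by essential single-valuedness) via the same two-case split on whether the order of the two half-marks is preserved or reversed after enlargement, with the reversed case handled by exactly the paper's containment argument $[0,\alpha']\supseteq[0,\beta]$. Your write-up is in fact somewhat more careful than the paper's at two points the paper leaves implicit — the monotone movement of the rightmost half-marks under right-enlargement, and the reduction of a dominating division to two intervals separated by a single cut point — so there is nothing to fix.
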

\begin{proof}
EF and PROP are obvious.

To prove WPO, suppose w.l.o.g. that Bob made the rightmost mark. Suppose we want to give Bob a piece worth strictly more than his current utility of $1/2$. This can be done in two ways. One way is to keep Bob at the right side and move the division line leftwards; this necessarily does not increase Alice's utility. The other way is to switch between Alice and Bob. But then, if Bob's utility is to be improved, he must receive at least Alice's current share (which is worth for him $1/2$). This leaves at most $1/2$ to Alice. Hence, there is no division in which the utilities of both agents are strictly higher.

We now prove that the procedure is RM. Since the rule is single-valued, it is sufficient to prove upwards-RM.

Suppose w.l.o.g. that Bob made the rightmost mark on the smaller cake (see Fig. \ref{fig:rightmost_mark}). Thus, Bob obtained the piece marked with B, which is worth exactly half for him. Alice received the part marked with A, which is worth at least half for her. When the cake is enlarged two cases are possible: The order of the cut marks made by the agents remains the same or gets reversed. In the first case, Bob still receives the rightmost cake (marked with B'). Since it still worth for him half of the cake, and since the cake is enlarged, he is not worse off. Neither is Alice, who receives a piece that contains her original share.

In the second case, Alice receives the rightmost piece A''. Note that she believes that the pieces A'' and B'' represent the same value, and B'' contains A, her original piece. Thus, she is not worse off. Similarly, Bob evaluates A and B the same, and he received B'' which contains A, thus he is not worse off either.
\end{proof}

Now we show that, in the special case in which the value-densities of both agents are strictly-positive, the rightmost-mark division rule is the \emph{only} rule which is PROP+WPO+RM. First we need a lemma.

\begin{lemma}
\label{lem:prop-structure}
Suppose there are $n=2$ agents, Alice and Bob, with strictly-positive valuations, and half-points $h_A,h_B$ respectively. Then, any PROP+WPO division rule that allocates connected pieces must:

(1) Cut the cake in the closed interval between $h_A$ and $h_B$ (that is, $[h_A,h_B]$ if $h_A\leq h_B$ or $[h_B,h_A]$ if $h_B\leq h_A$);

(2) Allocate the leftmost piece to the agent with the leftmost half-point (that is, Alice if $h_A\leq h_B$ or Bob if $h_B\leq h_A$).
\end{lemma}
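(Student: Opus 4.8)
The plan is to prove the two claims by exploiting proportionality and weak-Pareto-optimality directly, using the fact that in the strictly-positive case the half-points are unique (since each value measure is strictly increasing, $\vabs_i([0,x])=\vabs_i(C)/2$ has a unique solution $h_i$). Assume w.l.o.g.\ $h_A \le h_B$.

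First I would establish (1). Any connected division of the cake into two pieces is determined by a single cut-point $t$, with the leftmost piece being $[0,t]$ and the rightmost $[t,c]$. There are two ways to assign the pieces, so a division rule must specify both $t$ and the assignment. For proportionality, whoever receives $[0,t]$ needs $\vabs_i([0,t]) \ge \vabs_i(C)/2$, i.e.\ $t \ge h_i$; whoever receives $[t,c]$ needs $t \le h_j$. So a proportional connected division exists only if the cut is placed so that the agent with the smaller half-point gets the left piece and the one with the larger half-point gets the right piece, and $t \in [h_A,h_B]$. This already gives both (1) and (2) \emph{provided} we can rule out cuts outside $[h_A,h_B]$ — but actually proportionality alone forces $t \in [h_A, h_B]$, since if $t < h_A$ then neither agent can proportionally take $[0,t]$ (both would value it below half), and symmetrically if $t > h_B$ neither can take $[t,c]$. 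So proportionality alone gives (1), and the only consistent assignment is Alice-left, Bob-right, giving (2).

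Wait — I should double-check whether (2) genuinely needs WPO or follows from PROP alone. If $h_A < h_B$ strictly and $t \in (h_A,h_B)$, then giving $[0,t]$ to Bob fails PROP for Bob (he values it $<1/2$), so the assignment is forced; only the boundary cases $t=h_A$ or $t=h_B$ allow an ambiguous-looking assignment, but then one agent gets exactly $1/2$ either way. Hmm, if $h_A = h_B =: h$, then the only proportional cut is $t=h$ and both pieces are worth exactly $1/2$ to everyone, so either assignment works and (2) is vacuous. So in fact (1) and (2) both follow from PROP alone when valuations are strictly positive; the role of WPO is not needed for this lemma as stated. I would phrase the proof to note that PROP suffices, or — if the authors intend WPO to matter — re-read: perhaps (1) is meant to assert the cut is in the \emph{closed} interval and WPO is used elsewhere. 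In any case, the proof is: fix the cut-point $t$ of a division returned by the rule; use strict positivity to argue $\vabs_i([0,t])$ is strictly increasing in $t$; conclude from the two PROP inequalities (one per agent) that $t$ must lie in $[h_A,h_B]$ and that the left piece must go to the agent with the leftmost half-point.

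The main obstacle is purely bookkeeping: being careful with the degenerate case $h_A = h_B$ and with the boundary points $t = h_A$ and $t = h_B$, where an agent's piece is worth exactly $1/2$ and the assignment can be either way without violating PROP — so statement (2) should be read as ``\emph{may} allocate'' / is consistent with, rather than ``is forced to'', at the endpoints, and the proof should acknowledge this. I would also note explicitly that strict positivity is what guarantees uniqueness of $h_A, h_B$ and strict monotonicity of $t \mapsto \vabs_i([0,t])$, which is exactly what makes the two PROP inequalities pin down the interval $[h_A,h_B]$; without strict positivity an agent could be indifferent over a range of cuts and the lemma would fail.
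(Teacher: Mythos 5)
Your proposal follows essentially the same route as the paper: both derive (1) and (2) purely from proportionality plus strict positivity of the valuations, by noting that the agent receiving $[0,t]$ needs $t\geq h_i$ and the agent receiving $[t,c]$ needs $t\leq h_j$, and indeed the paper's own proof never invokes WPO either, so your observation on that point is correct. The one slip is your hedge about the boundary points: when $h_A<h_B$ the assignment is in fact forced even at $t=h_A$ and $t=h_B$, not merely at interior cut-points. At $t=h_A$ strict positivity gives $\vabs_B([0,h_A])<\vabs_B([0,h_B])=\tfrac{1}{2}\vabs_B(C)$, so Bob cannot take the left piece; symmetrically at $t=h_B$ Alice cannot take the right piece. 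So statement (2) holds in full strength (``is forced'') for every admissible cut, and only the degenerate case $h_A=h_B$ leaves the assignment genuinely ambiguous. This is exactly how the paper argues it (``we must have either $h_A<x$ or $x<h_B$, or both''), and your proof should replace the ``may allocate'' weakening with that one-line case split.
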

\begin{proof}
If $h_A=h_B$ then obviously the only PROP allocation is to cut at that point and give a half to each agent.

W.l.o.g, we now assume that $h_A< h_B$. If the cake is cut at $x<h_A$, then the piece to the left of x is worth less than 1/2 to both agents, so it cannot be given to any of them. Similarly, if the cake is cut at $x>h_B$, then the piece to the right of x is worth less than 1/2 to both agents. Hence, the cake must be cut at $x\in [h_A,h_B]$.

Since $h_A<h_B$, we must have either $h_A<x$ or $x<h_B$ (or both). In the former case, the piece to the right of $x$ is worth less than $1/2$ to Alice; in the latter case, the piece to the left of $x$ is worth less than $1/2$ to Bob. So in both cases, Alice must get the left and Bob must get the right.

\end{proof}

\begin{theorem}\label{thm:prop-rm-half}
For $n=2$ agents with strictly-positive valuations, any PROP+WPO+RM division rule that allocates connected pieces must allocate the rightmost agent a relative utility of exactly 1/2.
\end{theorem}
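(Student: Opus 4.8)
The plan is to argue by contradiction, using \emph{upwards} resource-monotonicity together with the structural constraint of Lemma~\ref{lem:prop-structure}. Suppose $R$ is PROP, WPO and RM, but on some cake $\Gamma=([0,c],\vabs_A,\vabs_B)$ with strictly-positive valuations some division $X\in R(\Gamma)$ gives the rightmost agent a relative value strictly above $1/2$. After relabelling the agents I may assume the half-points satisfy $h_A\le h_B$; if $h_A=h_B$ then by Lemma~\ref{lem:prop-structure} the only PROP cut is at $h_A$ and both agents get exactly $1/2$, so in fact $h_A<h_B$. By Lemma~\ref{lem:prop-structure}, $X$ cuts the cake at some $x\in[h_A,h_B]$, Alice receives $[0,x]$ and Bob receives $[x,c]$; since the rightmost agent is the one with the larger half-point (Lemma~\ref{lem:prop-structure}(2)), that is Bob, and the only way the claim can fail is $x<h_B$ (by PROP the value cannot be below $1/2$). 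Set $\alpha:=\vabs_A([0,x])$ and $\beta:=\vabs_B([x,c])$.

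The heart of the argument is to build a cake-enlargement $\Gamma'=([0,c'],\vabs_A,\vabs_B)$, adding an interval $[c,c']$ on the right, whose two densities are tuned so that: (i) both value measures remain strictly positive; (ii) Alice's new half-point $h_A'$ lies strictly between $x$ and $h_B$; (iii) Bob's new half-point $h_B'$ still lies to the right of $h_A'$, so the order of half-points is preserved and Lemma~\ref{lem:prop-structure} again forces Alice onto the left piece. A short computation shows $h_A'>x\iff \vabs_A([c,c'])>\gamma$ and $h_A'<h_B\iff \vabs_A([c,c'])<\gamma'$, where $\gamma:=2\vabs_A([0,x])-\vabs_A([0,c])\ge 0$ and $\gamma':=2\vabs_A([0,h_B])-\vabs_A([0,c])$, and $\gamma<\gamma'$ holds \emph{precisely because} $x<h_B$. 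So I would first fix Alice's added value $\vabs_A([c,c'])\in(\gamma,\gamma')$; this pins down $h_A'\in(x,h_B)$ and hence the fixed positive number $\vabs_B([x,h_A'])$. Only then would I pick Bob's density on $[c,c']$ so small that $\vabs_B([c,c'])<\vabs_B([x,h_A'])$; since $\vabs_B([x,h_A'])\le\vabs_B([0,c])$ this automatically keeps $h_B'<c$, and $h_B'>h_B>h_A'$, so the half-point order is indeed preserved and every division in $R(\Gamma')$ cuts at some $x'\in[h_A',h_B']$ with Alice on the left.

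Finally I would invoke upwards-RM on $(\Gamma,\Gamma')$: since $X\in R(\Gamma)$, there is $Y=([0,x'],[x',c'])\in R(\Gamma')$ with $\vabs_A([0,x'])\ge\alpha$ and $\vabs_B([x',c'])\ge\beta$. The first inequality forces $x'\ge x$, hence $x'\ge h_A'>x$; also $x'\le h_B'<c$. Writing $\vabs_B([x',c'])=\vabs_B([x',c])+\vabs_B([c,c'])$ and $\beta=\vabs_B([x,c])=\vabs_B([x,x'])+\vabs_B([x',c])$ (legitimate since $x\le x'\le c$), the second inequality collapses to $\vabs_B([c,c'])\ge\vabs_B([x,x'])\ge\vabs_B([x,h_A'])$, contradicting the choice $\vabs_B([c,c'])<\vabs_B([x,h_A'])$. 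Hence no such $X$ exists, so $R$ always gives the rightmost agent relative value at most $1/2$, and with PROP exactly $1/2$. The one delicate point, and the step I would double-check most carefully, is precisely the order of the two density choices on $[c,c']$: the target value $\vabs_B([x,h_A'])$ depends on $h_A'$, which depends on Alice's added density but not on Bob's, so Alice's density must be fixed first; and the interval $(\gamma,\gamma')$ into which it must fall is nonempty exactly in the bad case $x<h_B$ that we are trying to exclude.
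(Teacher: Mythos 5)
Your proof is correct and takes essentially the same approach as the paper: both arguments append a slice on the right whose two densities are tuned so that Alice's half-point moves strictly past the original cut while Bob's added value is too small to compensate for the middle portion he is forced (via Lemma~\ref{lem:prop-structure}) to surrender, contradicting upwards-RM. The paper simply makes one concrete choice of the added slice (pushing $h_A'$ exactly to $h_B$ and giving the new slice Bob-value $d$ where Bob's excess is $2d$), whereas you parameterize the choice in two steps; your care about fixing Alice's added value before Bob's is sound and matches the implicit logic of the paper's construction.
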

\begin{proof}
If $h_A=h_B$ then obviously both agents must receive exactly 1/2. W.l.o.g, we now assume that $h_A<h_B$, so the rightmost agent is Bob. We normalize the agents' valuations to 2. So the cake looks like the following (where $a\in[0,1]$ and $b\in[0,1]$ are some constants):
\begin{center}
\begin{tabular}{lcccc}
& $[0,h_A)$ & $[h_A,h_B)$ & $[h_B,1]$  \\
\hline
\c{$\vabs_A$} & \c{$1$} & \c{$a$} & \c{$1-a$} \\
\hline
\c{$\vabs_B$} & \c{$b$} & \c{$1-b$} & \c{$1$} \\
\hline
\end{tabular}
\end{center}
We claim that a PROP+WPO+RM algorithm must give Bob a value of at most 1. Suppose by contradiction that Bob's value is $1+2 d$, where $d>0$ is a constant, $d\in(0,1-b)$. Now, the cake grows as follows:
\begin{center}
\begin{tabular}{lccccc}
& &   & & $\blacktriangledown$ \\
\hline
\c{$\vabs_A$} & \c{$1$} & \c{$a$} & \c{$1-a$} & \c{$2a$} \\
\hline
\c{$\vabs_B$} & \c{$b$} & \c{$1-b$} & \c{$1$} & \c{$d$}  \\
\hline
\end{tabular}
\end{center}
In the extended cake, $h_A$ moves rightwards and is located exactly between slices \#2 and \#3. $h_B$ also moves slightly rightwards and is now located inside slice \#3. By Lemma \ref{lem:prop-structure}, the cake is cut at or to the right of the new $h_A$ and Bob receives the rightmost piece. Hence, Bob's new value is at most $1+d$ - in contradiction to RM.
\end{proof}

Theorem \ref{thm:prop-rm-half} implies that, when the value-densities are strictly-positive, the rightmost-mark rule is the only rule that satisfies PROP+WPO+RM with connected utilities.

\section{Conclusion and Future Work} \label{sec:conclusion}
We studied monotonicity properties in combination with the classical axioms of proportionality and Pareto-optimality. Table \ref{tab:summary} summarizes the properties of the various division rules. Most properties are proved in the paper body, except the WPO properties of the classic protocols, which are proved in Appendix \ref{sec:WPO}.

\def\y{\c{\yy}}  
\def\n{\c{\nn}}   
\def\N{\c{\NN}}  
\def\Y{\c{\YY}}  

\def\yy{\textcolor[RGB]{0,200,0}{Yes}}  
\def\nn{\textcolor[RGB]{200,0,0}{No}}   
\def\NN{\textcolor[RGB]{0,0,200}{No*}}  
\def\YY{\textcolor[RGB]{179,109,86}{Y.c.u.}}  
\def\upw{\textcolor[RGB]{190,125,219}{Upw}} 
\def\c#1{\multicolumn{1}{|c|}{#1}}  
\def\h#1{\c{\footnotesize #1}}  

\begin{table}[h!]
\begin{center}
\footnotesize
\begin{tabular}{lcccccccc}
\hline

\c{\textbf{Connected rules}} & \c{$n$} &  \h{CON}& \h{EF}&\h{PROP}&\h{PO}&\h{WPO}&\h{RM}&\h{PM} \\ \hline

\c{exact-proportional} & \c{Any}  &\y & \n & \y & \N & \n & \y & \y  \\ \hline

\c{absolute-equitable} & \c{Any}  &\y & \n & \n & \n & \YY & \y & \y  \\ \hline

\c{relative-equitable} & \c{Any}& \y & \n & \y & \N & \YY & \n & \y  \\ \hline

\c{rightmost-mark} & \c{2} & \y &\y & \y & \N & \YY & \y & \n  \\ \hline

 & &  & & & & & \\ \hline

\c{\textbf{Classic rules}}       & \c{$n$}  & \h{CON}&\h{EF} & \h{PROP} &\h{PO} &\h{WPO} &\h{RM} &\h{PM}  \\ \hline

\c{Banach-Knaster}       & \c{Any}& \y & \n & \y & \n & \n & \n & \n  \\ \hline

\c{Cut and Choose}       & \c{2}& \y  & \y & \y & \n & \YY & \n & \n  \\ \hline

\c{Dubins-Spanier}       & \c{Any}& \y  & \n & \y & \n & \n & \n & \n  \\ \hline

\c{Even-Paz}             & \c{Any}& \y  & \n & \y & \n & \n & \n & \n  \\ \hline

\c{Fink}                 & \c{Any}& \n  & \n & \y & \n & \n & \n & \c{\upw}  \\ \hline

\c{Selfridge-Conway}     & \c{3}& \n  & \y & \y & \n & \n & \n & \n  \\ \hline

\end{tabular}

\protect\caption{\label{tab:summary}Properties of division rules presented in this paper. \nn{} means that the property is not satisfied by the rule, while \NN{} means that the property cannot be satisfied by \emph{any} rule satisfying the other properties marked with \yy{} in the same line. In the WPO column \YY{} stands for 'Yes for connected utilities'. In case of the Fink rule, \upw{} indicates that -- with some additional adjustments -- the rule is upwards-PM.
}
\end{center}
\end{table}

Each of our connected division rules satisfies three of the four properties \{PROP,WPO,RM,PM\}. Thus, the divider has to choose whether to give up proportionality (PROP) or efficiency (WPO) or give up one of the monotonicity properties. We are still missing a rule that satisfies PROP+WPO+RM for three or more agents, as well as a rule that satisfies PROP+WPO+RM+PM for two or more agents. Additionally, combining envy-freeness with monotonicity for three or more agents looks like a fairly challenging task.

In this paper we ignored strategic considerations and assumed that all agents truthfully report their valuations. An interesting future
research topic is how to ensure monotonicity in truthful division procedures.

Finally, our procedure for equitable division uses moving knives and thus it is not discrete.  Recently, \citet{Cechlarova2011Near} and \citet{Cechlarova2012Computability} presented discrete procedures that attain approximately-equitable connected divisions. A division rule based on such procedures naturally attains approximate versions of proportionality and monotonicity. Further development of this idea is deferred to future work.

\section{Acknowledgments}
The idea of this paper was born in the COST Summer School on Fair Division in Grenoble, 7/2015 (FairDiv-15). We are grateful to COST and the conference organizers for the wonderful opportunity to meet with fellow researchers from around the globe. In particular, we are grateful to Ioannis Caragiannis, Ulle Endriss and Christian Klamler for sharing their insights on cake-cutting. We are also thankful to Marcus Berliant, Shiri Alon-Eron, Christian Blatter and Ilan Nehama for their very helpful comments.

The authors acknowledge the support of the `Momentum' Programme (LP-004/2010) of the Hungarian Academy of Sciences, the Pallas Athene Domus Scientiae Foundation, the OTKA grants K108383 and K109354, 
the ISF grant 1083/13, the Doctoral Fellowships of Excellence Program, the Wolfson Chair and the Mordecai and Monique Katz Graduate Fellowship Program at Bar-Ilan University.  In addition Sziklai was supported by the \'UNKP-16-4-I. New National Excellence Program of the Ministry of Human Capacities.

\appendix

\section{Existence of equitable-connected divisions}
\label{sub:max-eq-proof}
The proof uses the \emph{Borsuk-Ulam theorem}\footnote{Independently and contemporaneously to our work, \citet{Cheze2017} came up with a similar idea.}. It is about functions defined on spheres. Define the sphere $S^{n-1}$ as the set of points $(x_1,\dots,x_n)$ satisfying:
$|x_1|+\cdots+|x_n| = 1$ (it is a sphere in the $\ell_1$ metric).

\begin{theorem*}[Borsuk-Ulam]
Let $f_i$, for $i=1,\dots,n-1$, be real-valued functions of $n$ variables, that are continuous on the sphere $S^{n-1}$ .

Then, there exists a point on the sphere, $X^*=(x_1,\dots,x_n)\in S^{n-1}$,
such that for all $i$: $f_i(X^*) = f_i(-X^*)$.
\end{theorem*}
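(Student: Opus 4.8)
The plan is to recognise the final statement as the Borsuk--Ulam theorem in its \emph{odd-map} form for the $(n-1)$-dimensional sphere mapping into $\mathbb{R}^{n-1}$, and to establish it in three stages: a routine reduction to the standard round sphere, a routine antisymmetrisation reducing the assertion to an odd-map existence question, and one genuinely topological core fact that carries essentially all of the difficulty.

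First I would dispose of the non-standard $\ell_1$ geometry. The radial projection $\phi(X)=X/\|X\|_2$ is a homeomorphism from the $\ell_1$-sphere $S^{n-1}=\{X:\ \|X\|_1=1\}$ onto the Euclidean unit sphere, with continuous inverse $\psi(Y)=Y/\|Y\|_1$, and it is \emph{antipode-preserving}: $\phi(-X)=-\phi(X)$ and $\psi(-Y)=-\psi(Y)$. Hence, replacing each $f_i$ by $f_i\circ\psi$, it suffices to prove the statement on the round sphere, and pulling the resulting point back through $\psi$ transfers the conclusion verbatim to the $\ell_1$-sphere. So from now on I work on the standard sphere and collect the $f_i$ into a single continuous map $F=(f_1,\dots,f_{n-1}):S^{n-1}\to\mathbb{R}^{n-1}$.

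Next I would form the antisymmetrisation $g(X):=F(X)-F(-X)$. This map is continuous and \emph{odd}, i.e.\ $g(-X)=-g(X)$, and a zero of $g$ is exactly a point $X^*$ with $f_i(X^*)=f_i(-X^*)$ for all $i$. Thus the whole theorem reduces to the claim that every odd continuous map $g:S^{n-1}\to\mathbb{R}^{n-1}$ must vanish somewhere. Suppose, for contradiction, that $g$ never vanishes; then $h(X):=g(X)/\|g(X)\|$ is a well-defined continuous \emph{odd} map $h:S^{n-1}\to S^{n-2}$ into a sphere of one lower dimension.

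The remaining step --- showing no such odd map $h$ exists --- is the irreducible topological content of Borsuk--Ulam, and this is where I expect the real difficulty to lie. I would argue by induction on the dimension, invoking the classical fact of Borsuk that an odd self-map of a sphere has odd, hence nonzero, degree: the restriction of $h$ to the equatorial $S^{n-2}\subset S^{n-1}$ is an odd self-map of $S^{n-2}$, which is therefore not null-homotopic, yet this same restriction extends continuously over the bounding (hence contractible) upper hemisphere via $h$ itself, which would force it to be null-homotopic --- the desired contradiction. A self-contained alternative, perhaps better suited to a combinatorial readership, is to derive the non-existence of $h$ from Tucker's lemma applied to an antipodally symmetric triangulation of $S^{n-1}$. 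Either route closes the argument; since everything outside this core fact is routine, for the purposes of the paper one may simply invoke the classical theorem at this point.
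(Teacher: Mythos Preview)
The paper does not prove this statement at all: the Borsuk--Ulam theorem is stated here as a classical result and then \emph{applied} (not proved) to establish the existence of equitable connected divisions. Your proof sketch is mathematically sound --- the antipode-preserving radial homeomorphism between the $\ell_1$ and $\ell_2$ spheres, the antisymmetrisation $g(X)=F(X)-F(-X)$, and the odd-degree/Tucker contradiction are all standard and correct --- but it supplies something the paper never attempts. In the paper this theorem functions purely as a black box, so there is no ``paper's own proof'' to compare against; your write-up is simply a correct outline of the classical argument that the authors take for granted.
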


Assume that the cake is the interval $[0,1]$. Each point $(x_1,\dots,x_n) \in S^{n-1}$ corresponds to a partition of the cake to $n$ intervals, marked $X_1,\dots,X_n$, such that the length of interval $X_i$ (the $i$-th interval from the left) is $|x_i|$. Note that each partition corresponds to many points which differ in the signs of some or all of the coordinates (this representation of the partition space was introduced by \cite{Alon1986BorsukUlam} and used e.g.\ by \cite{Simmons2003}).

Suppose w.l.o.g. that the players are ordered from $1$ to $n$, so that player $i$ receives the piece $X_i$. For every point $X=(x_1,\dots,x_n)\in S^{n-1}$ and for every $i\in 1,\dots,n-1$, define the function $f_i(X)$ as follows:
\[
    f_i(X) = \sign(x_i)\cdot \vrel_i(X_i) - \sign(x_{i+1})\cdot \vrel_{i+1}(X_{i+1})
\]

Note that when $x_i=0$, interval $X_i$ is empty so $\vrel_i(X_i)=0$. Hence, the functions $f_i$ are continuous on $S^{n-1}$.

Hence, by the Borsuk-Ulam theorem, there exists a point $X^*$ on $S^{n-1}$ such that for all $i$: $f_i(X^*) = f_i(-X^*)$. By definition of the $f_i$, the cake division that corresponds to $X^*$ necessarily satisfies:

\[
     \sign(x_i)\cdot \vrel_i(X^*_i) = \sign(x_{i+1})\cdot \vrel_{i+1}(X^*_{i+1})
\]

This is possible only if for all $i\in(1,\dots,n-1)$:

\[
     \vrel_i(X^*_i) = \vrel_{i+1}(X^*_{i+1})
\]
Hence the division $X^*$ is equitable.

\section{Weak Pareto-optimality of classic protocols}
\label{sec:WPO}
It is a well-known fact that the classic protocols that we discuss here are not Pareto-optimal. Now we show that -- with the exception of Cut and Choose -- they are not even weakly Pareto-optimal. In the following example the Banach-Knaster, Dubins-Spanier and Even-Paz protocols coincide.

\begin{center}
    \begin{tabular}{lcccccc}
    \hline
    \multicolumn{1}{|c}{$\vabs_A$} & \multicolumn{1}{|c}{\cellcolor{myGreen!25}2} & \multicolumn{1}{|c}0 & \multicolumn{1}{|c}0 & \multicolumn{1}{|c}0 & \multicolumn{1}{|c}0 & \multicolumn{1}{|c|}4 \\
    \hline
     \multicolumn{1}{|c}{$\vabs_B$} &  \multicolumn{1}{|c}2 &  \multicolumn{1}{|c}{\cellcolor{myBlue!25}3} &  \multicolumn{1}{|c}{\cellcolor{myBlue!25}1} &  \multicolumn{1}{|c}{\cellcolor{myBlue!25}1} & \multicolumn{1}{|c}5 & \multicolumn{1}{|c|}0\\
    \hline
    \multicolumn{1}{|c}{$\vabs_C$} & \multicolumn{1}{|c}2 & \multicolumn{1}{|c}3& \multicolumn{1}{|c}1 &   \multicolumn{1}{|c}{1} & \multicolumn{1}{|c}{\cellcolor{myOrchid!25}5} & \multicolumn{1}{|c|}{\cellcolor{myOrchid!25}0}\\
    \hline
  \end{tabular}
    \quad\quad
    \begin{tabular}{lcccccc}
    \hline
    \multicolumn{1}{|c}{$\vabs_A$} & \multicolumn{1}{|c}{2} & \multicolumn{1}{|c}0 & \multicolumn{1}{|c}0 & \multicolumn{1}{|c}0 & \multicolumn{1}{|c}0 & \multicolumn{1}{|c|}{\cellcolor{myGreen!25}4} \\
    \hline
     \multicolumn{1}{|c}{$\vabs_B$} &  \multicolumn{1}{|c}{\cellcolor{myBlue!25}2} &  \multicolumn{1}{|c}{\cellcolor{myBlue!25}3} &  \multicolumn{1}{|c}{\cellcolor{myBlue!25}1} &  \multicolumn{1}{|c}1 & \multicolumn{1}{|c}5 & \multicolumn{1}{|c|}0\\
    \hline
    \multicolumn{1}{|c}{$\vabs_C$} & \multicolumn{1}{|c}2 & \multicolumn{1}{|c}3& \multicolumn{1}{|c}1 &   \multicolumn{1}{|c}{\cellcolor{myOrchid!25}1} & \multicolumn{1}{|c}{\cellcolor{myOrchid!25}5} & \multicolumn{1}{|c|}{0}\\
    \hline
  \end{tabular}
\end{center}

Alice gets the first slice as it composes $1/3$ of her cake value, while Bob and Carl perform a Cut and Choose on the rest of the cake. The table on the right presents an alternative allocation (still with connected pieces) which is strictly better for all agents.

The Fink method is not contiguous, hence we can obtain an improvement by composing the pieces from more slices.

\begin{table}[h!]
  \centering
    \begin{tabular}{lcccc}
         &  &    & & \\
    \hline
    \multicolumn{1}{|c}{$\vabs_A$} & \multicolumn{1}{|c}{\cellcolor{myGreen!25}0} & \multicolumn{1}{|c}{\cellcolor{myGreen!25}3} & \multicolumn{1}{|c}2  & \multicolumn{1}{|c|}1 \\
    \hline
    \multicolumn{1}{|c}{$\vabs_B$} & \multicolumn{1}{|c}2 & \multicolumn{1}{|c}1 & \multicolumn{1}{|c}{\cellcolor{myBlue!25}2}  & \multicolumn{1}{|c|}{\cellcolor{myBlue!25}$1+\epsilon$} \\
    \hline
  \end{tabular}
  \quad\quad
      \begin{tabular}{lcccc}
           &  &    &  &  \\
    \hline
    \multicolumn{1}{|c}{$\vabs_A$} & \multicolumn{1}{|c}{0} & \multicolumn{1}{|c}{\cellcolor{myGreen!25}3} & \multicolumn{1}{|c}{2}  & \multicolumn{1}{|c|}{\cellcolor{myGreen!25}1} \\
    \hline
    \multicolumn{1}{|c}{$\vabs_B$} & \multicolumn{1}{|c}{\cellcolor{myBlue!25}2} & \multicolumn{1}{|c}1 & \multicolumn{1}{|c}{\cellcolor{myBlue!25}2}  & \multicolumn{1}{|c|}{$1+\epsilon$} \\
    \hline
  \end{tabular}
\end{table}

For two agents the Fink method proceeds as the Cut and Choose: Alice cuts in the middle and Bob chooses the piece on the right. The second cake is a Pareto-improvement with four slices where every agent is strictly better off.

The same example shows that the Cut and Choose is not WPO for additive utilities. However, it is a contiguous protocol, thus it makes sense to investigate whether we could improve it with connected pieces.

\begin{lemma}\label{CC-nowpo}
The Cut and Choose method is weakly Pareto-optimal whenever the agents have connected utilities.
\end{lemma}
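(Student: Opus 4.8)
The plan is a proof by contradiction that exploits the one structural consequence of connected utilities: an improving division must again consist of two intervals, and two disjoint intervals on a line leave almost no freedom. First I would fix notation for the cut-and-choose division $X$ of the cake $C=[0,c]$: let $h$ be the point at which the designated cutter Alice cuts, so that $\vrel_A([0,h])=\vrel_A([h,c])=1/2$; the chooser Bob takes whichever piece he weakly prefers, so $\vrel_B(X_B)=\max\{\vrel_B([0,h]),\,\vrel_B([h,c])\}$, and Alice is left with a piece of relative value exactly $1/2$.

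Next, by the same observation used in Section~\ref{sec:model} --- replacing each share by its most valuable connected component leaves utilities unchanged and keeps the shares disjoint --- it suffices to rule out a division $Y=(Y_A,Y_B)$ with $Y_A,Y_B$ intervals and $\vrel_A(Y_A)>1/2$, $\vrel_B(Y_B)>\vrel_B(X_B)$. Since $Y_A$ and $Y_B$ are disjoint intervals in $[0,c]$, one lies entirely to the left of the other; pick a point $s$ separating them, and split on which agent is on the left. If $Y_A\subseteq[0,s]$ and $Y_B\subseteq[s,c]$, then $1/2<\vrel_A(Y_A)\le\vrel_A([0,s])$ forces $s>h$ (monotonicity of $x\mapsto\vrel_A([0,x])$ together with $\vrel_A([0,h])=1/2$), hence $Y_B\subseteq[s,c]\subseteq[h,c]$ and $\vrel_B(Y_B)\le\vrel_B([h,c])\le\vrel_B(X_B)$ --- contradiction. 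In the mirror case $Y_B\subseteq[0,s]$, $Y_A\subseteq[s,c]$, from $1/2<\vrel_A(Y_A)\le\vrel_A([s,c])=1-\vrel_A([0,s])$ we get $\vrel_A([0,s])<1/2$, hence $s<h$, so $Y_B\subseteq[0,h]$ and $\vrel_B(Y_B)\le\vrel_B([0,h])\le\vrel_B(X_B)$ --- contradiction again. Either way no division strictly dominates $X$, so the rule is WPO.

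The argument is short; the only points that need care are the two soft steps at the start --- that passing to best connected components keeps shares disjoint and utilities unchanged, and that two disjoint intervals always admit a separating point (the degenerate case of touching at a single endpoint being harmless, since endpoints have zero value) --- and the strictness bookkeeping that turns $\vrel_A(Y_A)>1/2$ into the strict inequality $s>h$ (respectively $s<h$), which is immediate from absolute continuity and monotonicity of the cumulative value function.
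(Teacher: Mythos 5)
Your proof is correct and takes essentially the same route as the paper's: reduce any purported improvement to a two-interval division, split on which agent sits to the left of the separating cut, and show that in either case Bob's piece is contained in one of the two halves Alice created, so his value cannot exceed $\max\{\vrel_B([0,h]),\vrel_B([h,c])\}=\vrel_B(X_B)$. Your version is somewhat more explicit than the paper's (it spells out the passage to best connected components and the existence of a separating point, which the paper leaves implicit), but the underlying argument is the same.
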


\begin{proof}
Suppose Alice cuts at point $x \in [0,c]$ and Bob chooses. The cake is divided into the left piece ($L$) and the right piece ($R$). Without loss of generality we may assume that Alice receives the left piece. Alice's utility can be improved in only two ways: (a) Cut at some point $x'>x$ and give the left piece to Alice. But then Bob would receive $R' \subset R$, which obviously cannot be strictly better than $R$. (b) Cut at some point $x''<x$ and give Alice the right piece $R'' \supset R$. Then Bob receives $L'' \subset L$. Since Bob preferred $R$ to $L$ he did not gain utility by this swap. So it is impossible to strictly increase the utility of Alice and Bob simultaneously.
\end{proof}

Finally we show that the Selfridge-Conway does not satisfy WPO either. Consider the following cake.

\begin{table}[h!]
  \centering
    \begin{tabular}{lcccccc}
    \hline
    \multicolumn{1}{|c}{$\vabs_A$} & \multicolumn{1}{|c}{3} & \multicolumn{1}{|c?{0.5mm}}1 & \multicolumn{1}{c}3 & \multicolumn{1}{|c?{0.5mm}}1 & \multicolumn{1}{c}{\cellcolor{myGreen!25}2} & \multicolumn{1}{|c|}{\cellcolor{myGreen!25}2} \\
    \hline
     \multicolumn{1}{|c}{$\vabs_B$} &  \multicolumn{1}{|c}1 &  \multicolumn{1}{|c?{0.5mm}}{3} &  \multicolumn{1}{c}{\cellcolor{myBlue!25}1} &  \multicolumn{1}{|c?{0.5mm}}{\cellcolor{myBlue!25}3} & \multicolumn{1}{c}1 & \multicolumn{1}{|c|}2\\
    \hline
    \multicolumn{1}{|c}{$\vabs_C$} & \multicolumn{1}{|c}{\cellcolor{myOrchid!25}4} & \multicolumn{1}{|c?{0.5mm}}{\cellcolor{myOrchid!25}0}& \multicolumn{1}{c}0 &   \multicolumn{1}{|c?{0.5mm}}{0} & \multicolumn{1}{c}{0} & \multicolumn{1}{|c|}{3}\\
    \hline
  \end{tabular}
    \quad\quad
      \begin{tabular}{lcccccc}
    \hline
    \multicolumn{1}{|c}{$\vabs_A$} & \multicolumn{1}{|c}{3} & \multicolumn{1}{|c}1 & \multicolumn{1}{|c}{\cellcolor{myGreen!25}3} & \multicolumn{1}{|c}1 & \multicolumn{1}{|c}{\cellcolor{myGreen!25}2} & \multicolumn{1}{|c|}{2} \\
    \hline
     \multicolumn{1}{|c}{$\vabs_B$} &  \multicolumn{1}{|c}1 &  \multicolumn{1}{|c}{\cellcolor{myBlue!25}3} &  \multicolumn{1}{|c}{1} &  \multicolumn{1}{|c}{\cellcolor{myBlue!25}3} & \multicolumn{1}{|c}1 & \multicolumn{1}{|c|}2\\
    \hline
    \multicolumn{1}{|c}{$\vabs_C$} & \multicolumn{1}{|c}{\cellcolor{myOrchid!25}4} & \multicolumn{1}{|c}{0}& \multicolumn{1}{|c}0 &   \multicolumn{1}{|c}{0} & \multicolumn{1}{|c}{0} & \multicolumn{1}{|c|}{\cellcolor{myOrchid!25}3}\\
    \hline
  \end{tabular}
\end{table}

Alice cuts the cake into three equal parts. Since the two most valuable slices have equal value for Bob he passes. The agents choose in the Carl-Bob-Alice order and obtain the pieces shown by the table on the left. However, as the table on the right shows, there is an allocation where every agent is strictly better off.

\newpage

\small{
\bibliography{CCM,FairAndSquare}
}
\end{document}